\documentclass[12pt]{article}
\usepackage{amsmath}
\usepackage{graphicx}
\usepackage{enumerate}
\usepackage{natbib} %comment out if you do not have the package
\usepackage{url} % not crucial - just used below for the URL

%\pdfminorversion=4
% NOTE: To produce blinded version, replace "0" with "1" below.
\newcommand{\blind}{1}

% DON'T change margins - should be 1 inch all around.
\addtolength{\oddsidemargin}{-.5in}%
\addtolength{\evensidemargin}{-.5in}%
\addtolength{\textwidth}{1in}%
\addtolength{\textheight}{1.3in}%
\addtolength{\topmargin}{-.8in}%

\usepackage{epsfig}
\usepackage{epstopdf}
\usepackage{graphics,graphicx,color}
\usepackage{amssymb,amsfonts,amsthm,amsmath}
\usepackage{multirow}
\usepackage{mathrsfs}
\usepackage{indentfirst}
\usepackage{subfig}
\usepackage{float}
\usepackage{hyperref}
\usepackage{tabulary}
\usepackage[title]{appendix}
\usepackage{booktabs, siunitx,caption}
\usepackage[overload]{empheq}
\usepackage{cleveref}

% Algorithm environment
\usepackage{algorithmicx,algpseudocode, algorithm}

\usepackage{xcolor}
\definecolor{DarkGreen}{rgb}{0.1,0.5,0.1}

\newcommand{\revision}[1]{\color{black}{#1}\color{black}}
\newcommand{\new}[1]{\color{black}{#1}\color{black}}

\newtheorem{theorem}{Theorem}

\newtheorem{lemma}{Lemma}

\pdfminorversion=4
\begin{document}

\def\spacingset#1{\renewcommand{\baselinestretch}%
{#1}\small\normalsize} \spacingset{1}

%%%%%%%%%%%%%%%%%%%%%%%%%%%%%%%%%%%%%%%%%%%%%%%%%%%%%%%%%%%%%%%%%%%%%%%%%%%%%%

\if1\blind
{
\title{\bf Bandit Change-Point Detection for Real-Time Monitoring High-Dimensional Data Under Sampling Control}

\author{Wanrong Zhang \hspace{.2cm}\\
Harvard University\\
	and \\
	Yajun Mei
	\hspace{.2cm}\\
Georgia Institute of Technology}
\date{}
  \maketitle
} \fi

\if0\blind
{
  \bigskip
  \bigskip
  \bigskip
  \begin{center}
    {\LARGE\bf Bandit Change-Point Detection for Real-Time Monitoring High-Dimensional Data Under Sampling Control}
\end{center}
  \medskip
} \fi

\bigskip
\begin{abstract}
In many real-world problems of real-time monitoring high-dimensional streaming data, one wants to detect an undesired event or change quickly once it occurs, but under the sampling control constraint in the sense that one might be able to only observe or use selected components data for decision-making per time step in the resource-constrained environments.
In this paper, we propose to incorporate multi-armed bandit approaches into sequential change-point detection to develop an efficient bandit change-point detection algorithm based on the limiting Bayesian approach to incorporate a prior knowledge of potential changes. Our proposed algorithm, termed Thompson-Sampling-Shiryaev-Roberts-Pollak (TSSRP), consists of two policies per time step: the adaptive sampling policy applies the Thompson Sampling algorithm to balance between exploration for acquiring long-term knowledge and exploitation for immediate reward gain, and the  statistical decision policy fuses the local Shiryaev-Roberts-Pollak statistics to determine whether to raise a global alarm by sum shrinkage techniques. Extensive numerical simulations and case studies demonstrate the statistical and computational efficiency of our proposed TSSRP algorithm.
\end{abstract}

\noindent%
{\it Keywords:} adaptive sampling; change-point detection;  partially observed variables; Shiryaev-Roberts procedure;  Thompson sampling
\vfill

\newpage
\spacingset{2} % DON'T change the spacing!
\section{Introduction}

Real-time monitoring high-dimensional streaming data under sampling control constraints appears in many important applications such as
intrusion detection in computer networks \citep{bass1999multisensor}, event detection in social networks \citep{viswanath2014towards}, epidemic disease outbreak monitoring \citep{yang2015accurate}, anomaly detection in manufacture processes \citep{ding2006distributed}.
In these applications, one often can only observe or use selected components of the data for decision-making due to the capacity limitation in data acquisition, transmission, processing, or storage. For instance,  the sensor devices might have limited battery powers; thus, one might want to use a subset of sensors per time step over a long period instead of using full sensors simultaneously over a short period. Likewise, while sensing is usually cheap, the communication bandwidth is often limited from remote sensors to the fusion center that makes a global decision. The fusion center might prioritize certain local sensors to send local information for decision making.
Also, in many applications such as quality engineering or biosurveillance, one faces the design issue and needs to decide which variables or patients to be measured to detect the defect or disease outbreak more efficiently.

In this work, we investigate how to efficiently real-time monitor high-dimensional streaming data under resource constraints. We assume that the full data from a system is a $K$-dimensional random vector ${\bf X}_{t}=(X_{1,t}, \cdots, X_{K,t})$ at each time step,  but we can only observe $q$ out of $K$ components per time step. Here the component $X_{k,t}$, with $k=1,\ldots, K$ and $t=1,2,\ldots$, can be either the raw data from local sensors or the derived features such as wavelet coefficients, principal components. Initially, the system is in control in the sense that $X_{k,t}$ follows a probability density function $f_{k}.$ At some unknown time $\nu,$ an event may occur and change the distributions of a sparse subset of the components.  Our goal is to design an efficient algorithm to adaptively decide which variable to sample at each time step, and when to raise a global alarm to indicate the possible occurrence of the change.

Without resource constraints, monitoring fully observed streaming data has raised much attention in the statistical quality control (SPC) and sequential change-point detection literature, see \cite{zou2009multivariate, li2019two, li2020efficient, li2020diagnostic, zou2015efficient}. The existing work generally falls into two frameworks: the cumulative sum (CUSUM) type method, which is based on the generalized likelihood ratio (GLR) framework; and the Shiryaev-Roberts type method, which is based on the Bayesian framework. For classical research on one-dimensional data streams,
see \cite{shiryaev1963optimum, lorden1971procedures, pollak1985optimal, lai1995sequential, lai1998information, basseville1993detection,poor2008quickest, tartakovsky2014sequential}.
For recent research on high-dimensional data streams with fully observed data,  see \cite{zhang2012model, xie2013sequential, wang2015large, cho2015multiple, chan2017optimal,chu2019asymptotic}. Additionally, another framework is to monitor each data stream separately by computing respective local detection statistics and then fuse local statistics into a global-level monitoring statistic, see \cite{mei2010efficient, mei2011quickest,  liu2016scalable, li2020efficient}. This framework can balance the tradeoff between computational efficiency and statistical efficiency.

Real-time monitoring high-dimensional {\it partially} observed data streams under the sampling control has been studied in the literature of statistical process control in applied statistics. A prominent line of work is based on the CUSUM procedure for observed local streams with an artificially introduced compensation parameter for the unobserved local stream, see \cite{liu2015adaptive, xian2018nonparametric, wang2018spatial,xian2021online}. While the compensation parameter can increase the chance of exploring unobserved local streams, tuning the parameter is challenging. Another line of work leverages extra information such as the correlation structure to approximate unobserved local streams and then to plug in the standard monitoring methods of fully observed data, see \cite{Zhang:Hoi:2019, Nabhan:Mei:Shi:2021}.

We propose a bandit change-point detection algorithm for efficient real-time monitoring of high-dimensional streaming data under the sampling control. Our contributions are twofold: (i) We incorporate prior knowledge of potential changes to update unobserved local streams by treating the likelihood ratios of unobserved data as one.  (ii) We incorporate the Thompson Sampling algorithm in the multi-armed bandit (MAB) problem into the Shiryaev-Roberts-Pollak procedure in the sequential change-point detection literature.  While Bayesian methods often involve extensive computations, our method, termed as Thompson-Sampling-Shiryaev-Roberts-Pollak (TSSRP) algorithm, is computationally efficient when monitoring high-dimensional data when the data streams are mutually independent and we have some prior knowledge on the post-change distribution. The limiting Bayesian framework allows our algorithm to have a natural interpretation and avoid the tuning of artificial tuning parameters such as the compensation parameters for unobserved data. It can balance the tradeoff between exploiting the observed local components that maximize the immediate detection performance and exploring not-been-monitored local components that might provide new information to improve future detection performance. In particular, our proposed TSSRP algorithm performs similar to random sampling in the in-control state when no changes occurring, but becomes a greedy sampling on those affected local components in the out-of-control state when a change occurs. Numerical simulations and case studies show the efficiency of our proposed TSSRP algorithm.

The classical MAB problem focuses on developing algorithms to balance the tradeoff between exploration for acquiring long-term knowledge and exploitation for immediate reward gain,  see \cite{lai1985asymptotically, robbins1985some,scott2010modern, gittins2011multi, bubeck2012regret,agrawal2012analysis,cao2019nearly, zhao2019multi} and references therein. Our work and the classical MAB problem both deal with the dynamical/adaptive sampling strategy that samples those local streams with the largest values of some suitable local statistics. Nevertherless, our work is different from the existing research on the MAB with non-stationary or piecewise constant rewarding functions, see \cite{cao2019nearly, Ghatak:2020}, because our primary objective is to minimize the average detection delay subject to controlling false alarm rate, whereas the bandit problems minimize cumulative regret. In summary, we apply the bandit ideas to develop a new sequential change-point detection algorithm for monitoring partially observed data.

The remainder of this paper is organized as follows. In Section \ref{background}, we provide the mathematical formulation of our problem and also review the background of the multi-armed bandit problem and the sequential change-point detection problem. Next, we introduce our proposed method and develop its theoretical properties in Section \ref{bigmethod}. Then we evaluate the performance of our proposed algorithm through simulation studies and real data case studies in Section \ref{experiment} and \ref{realdata}, respectively. Concluding remarks are included in Section \ref{conclusion}. We provide the detailed proofs of all theorems in the Supplementary Materials, which also include additional numerical experiments.

\section{Problem Formulation and Backgrounds}\label{background}
In this section, we present the mathematical formulation of real-time monitoring high-dimensional streaming data in resource-constrained environments in Subsection \ref{formulation}. Then we provide a brief review of the Thompson Sampling algorithm for the multi-armed bandit problem in Subsection \ref{banditreview}, followed by the review of the Bayesian approach for sequential change-point detection in Subsection \ref{shiryaev}.

\subsection{Problem Formulation}\label{formulation}

Suppose we are monitoring $K$ independent data streams in a system. Let $X_{k,t}$ denote the observation of the $k$-th data stream at time $t$ for $t=1,2,\ldots$ and $k=1,2,\ldots,K$. Here each data stream $X_{k,t}$ can be the raw data itself or its derived features such as wavelet coefficients, principal components. Samples from data streams are assumed mutually independent. Each data stream generates identically distributed data from a specific distribution $f_{\theta_{k}}$. At some unknown change time $\nu \in \{1, 2, \ldots\}$, an undesirable event occurs and changes the distributions of some data streams abruptly in the sense of changing the values of the parameters $\theta_{k}$. Conditional on the change time $\nu >1$, for those affected data streams, the observation $X_{k,1}, \ldots, X_{k,\nu-1} $  are independent and identically distributed (iid) with density $f_{\theta_{k,0}}$ while $X_{k,\nu},  X_{k,\nu+1}, \ldots$ are iid with another density $f_{\theta_{k,1}}$, where $\theta_{k,1}> \theta_{k,0}$, and the case where $\theta_{k,1}< \theta_{k,0}$ can be handled similarly. For those unaffected data streams, all observations $X_{k,1}, X_{k,2} \ldots$ are iid with density $f_{\theta_{k,0}}$. Here we do not know which subset of the local streams changed, and we assume that the affected local streams are sparse. In practice, practitioners usually specify $f_{\theta_{k,1}}$ as the interested-smallest magnitude of a change to be detected.

Under sampling control, we can only observe or use selected partial data for decision making. We assume that only $q < K$ data streams can be selected to collect data at each time $t$. Mathematically, let $\delta_{k,t}$ be the indicator function that $\delta_{k,t}=1$ if and only if the $k$-th data stream $X_{k,t}$ is selected at time step $t$. The resource constraint implies that $\sum_{k=1}^{K}\delta_{k,t}=q$ at each time step $t=1,2,\cdots$. Imagine that we put $q$ sensors onto $K$ locations, then the $\delta_{k,t}$ can be thought as whether to put a sensor to the $k$-th data stream at time $t$. Let $S_t$ denote the locations where $\delta_{k,t}=1$, and we refer to it as the sensor layouts. Under our notation, the observed data can be represented as $\{X_{k,t}^\ast\}=\{X_{k,t}\delta_{k,t}\}$, $k=1,2,\ldots, K$.

For the change-point detection problem under sampling control, a statistical scheme consists of two policies per time step. One is the adaptive sampling policy $\delta$ that decides the observable location $\delta_{k,t}$, and the other is the statistical decision policy, often defined as a stopping time $T,$ that raises an alarm based on the observations available $\{X_{k,t}^\ast\}=\{X_{k,t}\delta_{k,t}\}_{1 \le k \le K, 1 \le t \le T}$. Our objective is to design a scalable and efficient statistical scheme of $(\delta, T)$ that minimizes the average detection delay
\begin{equation}\label{delay}
D(T)=\sup_{1\le\nu<\infty}\mathbb{E}_{\nu}(T-\nu|T \ge \nu),
\end{equation}
 subject to the Average Run length (ARL) to False Alarm constraint
 \begin{equation}\label{falsealarm}
 \mathbb{E}(T|\nu=\infty)\ge \gamma,
 \end{equation}
where $\gamma$ is a pre-specified constant.

It is worth noting that our problem connects to the multi-armed bandit problem in the sampling policy; however, they are fundamentally different due to different performance criteria.

\subsection{Thompson Sampling for Multi-Armed Bandit}\label{banditreview}

In this subsection, we briefly review the multi-armed bandit (MAB) problem, first introduced by \cite{robbins1952some}, and one of the most popular algorithms, Thompson Sampling \citep{thompson1933likelihood}. Under a classical setting of MAB, a gambler can play one of $K$ slot machines (or arms) for $K \ge  2$, but she or he has no prior knowledge about which machine has a potentially higher reward. The only way to learn rewards is to play the machines. The problem of interest is how the gambler decides which arm to play at each time step, to maximize the total rewards through $N$ plays.

One of the most popular MAB algorithms is Thompson Sampling, which is a natural Bayesian algorithm. It has been widely used for personalized advertisements and product recommendation \citep{agrawal2013thompson}, as its efficiency has been well demonstrated in many real-world applications, especially in the high-dimensional setting.  See \cite{scott2010modern}, \cite{chapelle2011empirical}. In particular,  \cite{agrawal2012analysis} showed that the Thompson Sampling algorithm asymptotically minimizes the expected regret.

The idea of Thompson Sampling is to sample arms based on the largest values of the random realizations of the posterior distributions instead of the posterior means. Specifically, at each time step, after updating the posterior distribution of the mean $\theta_{k}$ for each arm, we randomly sample a realization from the posterior distributions, denoted by $\hat{\theta}_{k}$ from the $k$-th arm. Then we select the arm with the largest random realization, i.e., $\arg \max_{1 \le k \le K} \hat{\theta}_{k}$. This allows us to have better chances to sample those arms with fewer observations, thereby balancing the tradeoff between the exploration for acquiring long-term knowledge and the exploitation for immediate reward gain.

In the multi-armed bandit problem, when we are allowed to observe $q$ arms each time, it is natural to extend the original Thompson Sampling algorithm to select the $q$-largest realizations. Such an approach often holds nice properties under reasonable conditions, see \cite{anantharam1987asymptotically, pandelis1999optimality, kaufmann2016complexity}. Thus we will adopt the Thompson sampling with $q$-largest realizations in our context.

\subsection{Shiryaev-Roberts Procedure}\label{shiryaev}

We now review the Bayesian approach for the simplest sequential change-point detection problem pioneered by \cite{shiryaev1963optimum}, as well as the corresponding limiting Bayesian approach. See \cite{roberts1966comparison,pollak1985optimal, pollak1987average}. Consider the simplest univariate case when we observe a sequence of independent observations $X_1, X_2, \ldots$, whose distribution might change from $f_{\theta_0}$ to $f_{\theta_1}$ at some unknown time $\nu$. Since the goal is to detect the change time $\nu$ quickly, the statistical procedure is defined as a stopping time $T$ with respect to the observed data $\{X_t\}_{t \ge 1},$  where $\{T=t\}$ means that we raise the alarm at time $t$ to indicate that a change has occurred up to time $t.$

Under the Bayesian formulation, it is assumed that the change-point $\nu$ has a geometric prior distribution:
\begin{eqnarray}\label{bayesian.review}
P(\nu = t) = p(1-p)^{t-1} \quad \mbox{ for } \quad t= 1,2,3,\cdots
\end{eqnarray}
where $0 < p < 1$ is a pre-specified constant. Moreover, conditional on the (unknown) change-point $\nu$, the pre-change observations, $X_1,\ldots,X_{\nu-1},$ are iid with density $f_{\theta_0}$ and are independent of the post-change observations, $X_\nu, X_{\nu+1},\ldots$ which are iid with density $f_{\theta_1}$. Assume that the cost of per post-change observation is $c > 0.$ Then the Bayesian formulation of sequential change-point detection is to find a statistical procedure $T$ that minimizes the Bayes risk $P(T<\nu)+c\mathbb{E}(T-\nu)^+$.

\cite{shiryaev1963optimum} first solved this problem, and the Bayesian solution is to raise an alarm at the first time when the posterior probability of change having occurred, i.e., $P(\nu\le t| X_1, \ldots, X_t)$, is greater than a certain threshold. Under the limiting Bayes framework, one considers the test statistic of the form $P(\nu\le t| X_1, \ldots, X_t)/ p(1- P(\nu\le t| X_1, \ldots, X_t))$  as $p$ goes to zero. This yields the so-called Shiryaev-Roberts procedure (\cite{roberts1966comparison}) that raises an alarm at time
\begin{eqnarray}
T_A=\inf\{t|R_t\ge A\},
\end{eqnarray}
where $R_{t}$ is the Shiryaev-Roberts statistic defined as
\begin{eqnarray}
R_t=\sum_{j=1}^{t}\prod_{i=j}^{t}\frac{f_{\theta_1}(X_i)}{f_{\theta_0}(X_i)},
\end{eqnarray}
and the threshold  $A$ is a pre-specified constant.
\cite{pollak1985optimal, pollak1987average} showed that this procedure enjoys nice asymptotic minimax properties, i.e., minimize the worst average detection delay in (\ref{delay}) up to within an $o(1)$ term subject to the ARL to false alarm constraint in (\ref{falsealarm}), as $\gamma$ goes to $\infty$.

\section{Bandit Change-Point Detection}\label{bigmethod}

In this section, we present our proposed TSSRP algorithm for the real-time monitoring high-dimensional streaming data under sampling control.
Our proposed algorithm can be thought of as the limit of Bayesian procedures that adapt the Thompson sampling policy of sampling local streams based on the random realizations of the posterior distributions.

We assume the data contains $K$ independent data streams. The local change time $\nu_k$ of the $k$-th local data stream has a prior Geometric($p$) distribution. The $k$-th local stream has an initial prior probability $\Pi_{k,0}$ of change and $\Pi_{k,0}$'s are mutually independent, and identically distributed from a common prior $G = G_{p}$. We can extend the idea to other non-homogeneous scenarios, e.g., in quality control of $K$-stages manufacturing process where some stages are more prone to defect. Specifically, the distribution of the local change time $\nu_{k}$ is as follows:
\begin{eqnarray}\label{bayesian.review}
P(\nu_k=0) = \Pi_{k,0},\quad  P(\nu_k = t) = (1- \Pi_{k,0}) p(1-p)^{t-1} \quad \mbox{ for } \quad t= 1,2,3,\cdots,
\end{eqnarray}
where $\Pi_{k,0}$ can be either a constant, e.g., $\Pi_{k,0} \equiv 0$, or a random variable that has a distribution $G_{p}.$

After taking observations at the time step $t,$ we update the posterior distribution of $\nu_{k},$ denoted by $\Pi_{k,t} = P(\nu_{k} \le t | \mbox{ Observed Data}).$  This computation is straightforward since the raw data $X_{k, t}$ is distributed as $f_{\theta_{k,0}} I( \nu_{k} < t) + f_{\theta_{k,1}} I(\nu_{k} \ge t),$ although it is observable if and only if the sampling indicator $\delta_{k,t} = 1$. Next, we combine the local posterior distributions $\Pi_{k,t}$'s together to decide if we would raise a global alarm. If yes, then we stop taking any observations. If no, then we will proceed to the next time step --- as in the Thompson sampling algorithm, we decide to observe those local streams in time step $t+1$ with the largest values of the posterior values $\Pi_{k,t}$'s if the initial values $\Pi_{k,0}$ are constant, or their random realizations if the initial values $\Pi_{k,0}$ are random variables.

Implementing the Bayesian algorithm in a naive way is computationally infeasible. Our algorithm overcomes this challenge by leveraging the property that the limit of the Bayesian algorithms as $p$ goes to $0$ has a mathematical equivalent representation that is computationally scalable. Therefore, our algorithm is both statistically and computationally efficient.

We present our proposed TSSRP methodology in Subsection \ref{method} and discuss the choice of parameters and prior distribution in Subsection \ref{parameters}. We develop the theoretical properties of our proposed TSSRP algorithm including its connection to the Bayesian procedures in Subsection \ref{property_dis}.

\subsection{Methodology Development}\label{method}

In the context of real-time monitoring high-dimensional streaming data under sampling control, a statistical procedure
consists of two policies per time step: (1) the adaptive sampling policy to decide the observation location; (2) the statistical decision policy to raise a global alarm based on the observed data. A common challenge in both components or policies is how to construct local statistics for each local stream that can guide us to make efficient decisions for both adaptive sampling and statistical decision policies. Our proposed method's key novelty is to recursively update two-dimensional local statistics over time that allow conveniently implement the Thompson Sampling.

\subsubsection{Local Statistics}\label{local}

We propose to recursively compute two-dimensional local statistics, denoted by  $R_{k,t}$ and $L_{k,t}$, at the $k$-th local data steam at each time step $t=1, 2, \cdots,$ where  $R_{k,t}$ mimics the classical Shiryaev-Robert statistics
\begin{eqnarray} \label{eq.R}
R_{k,t} = \left\{
  \begin{array}{ll}
    (R_{k,t-1}+1) \frac{f_{\theta_{k,1}}(X_{k,t})}{f_{\theta_{k,0}}(X_{k,t})}, & \hbox{if $\delta_{k,t}=1;$} \\
    R_{k,t-1}+1, & \hbox{if $\delta_{k,t}=0.$}
  \end{array}
\right.
\end{eqnarray}
with initial value $R_{k,t=0} =0,$ and the statistics $L_{k,t}$ mimics the likelihood ratio function
and
\begin{eqnarray} \label{eq.likelihood}
	L_{k,t} =
	\left\{
	\begin{array}{ll}
	L_{k,t-1} \frac{f_{\theta_{k,1}}(X_{k,t})}{f_{\theta_{k,0}}(X_{k,t})}, & \hbox{if $\delta_{k,t}=1$;} \\
	L_{k, t-1}, & \hbox{if $\delta_{k,t}=0.$}
  \end{array}
\right.
\end{eqnarray}
with initial value $L_{k,t=0} =1.$

At a high-level, the local statistic $R_{k,t}$ mainly provides the evidence how likely a local change has occurred, whereas the other local statistic $L_{k,t}$ is related to the number of samples taken at a given local data stream. If $\delta_{k,t}=1$, i.e., if one takes observations from that specific local data streams, then the update on $R_{k,t}$ and $L_{k,t}$ follows the classical  Shiryaev-Robert or likelihood  statistics, respectively. On the other hand, if $\delta_{k,t}=0$, i.e., if we do not take local observations, then we recursively update $R_{k,t}$ and $L_{k,t}$ by adding or multiplying the constant $1$, respectively. The intuition is to treat $\frac{f_{\theta_{k,1}}(X_{k,t})}{f_{\theta_{k,0}}(X_{k,t})}$ as $1$ if $X_{k,t}$ is missing.

Note that the definition or computation of $(R_{k,t}, L_{k,t})$ depends on which local sensors will be observed, i.e., the values of sampling indicator variables $\delta_{k,t}$'s, which will be defined in the next subsection.

\subsubsection{Adaptive Sampling Policy}\label{allocation}

Our proposed adaptive sampling policy is as follows.

At each time step $t=0,1,2\cdots,$ we compute the two-dimensional local statistics, $(R_{k,t}, L_{k,t}),$  based on the observed data streams, and we also sample a randomized value $\tilde{R}_{k,t}$ from a pre-specified  prior distribution $G$. The $\tilde{R}_{k,t}$ can be treated as the ``initial values." 

When $G$ is a point mass density of $0,$ $\tilde{R}_{k,t} \equiv 0$ for all $k=1, \cdots, K.$ Otherwise, $\tilde{R}_{k,t}$  can be different across sensors, which can be viewed as the prior knowledge of how likely a local data stream is likely affected by the change.  
Next,
we compute a real-valued local statistic that determines the sampling policies:
\begin{equation}\label{eq.random}
R^*_{k,t}=R_{k,t} + L_{k,t} \tilde{R}_{k,t}.
\end{equation}
Finally, at time step $t+1,$ we follow the Thompson Sampling to adaptively choose the local data streams with the largest $q$ values of $R^*_{(k),t}$ in (\ref{eq.random}).  Let $l_{(k),t+1}$ denote the corresponding index of the $k$th largest values, then the new sensor layout will be $S_{t+1}=\{l_{(1),t+1},\ldots,l_{(q),t+1}\}$ at time $t+1.$

Let us provide a high-level rationale for our proposed adaptive sampling policy. First, note that $R^*_{k,t}$ in (\ref{eq.random}) can be thought of as a randomized version of $R_{k,t}$ and allows us to balance better the tradeoff between those local streams having larger observed $R_{k,t}$ and those local streams having fewer observations. Second, our sampling policy is computationally efficient, since it is based on the recursive updates of the two-dimensional local statistics, $(R_{k,t}, L_{k,t}).$ Finally,  our sampling policy is the Thompson sampling method under the limiting Bayes framework, since the larger the $R^*_{k,t}$ value, the larger the realization of the posterior distribution of a local change.

\subsubsection{Global Decision}\label{stopping}

Our proposed global decision policy is to raise \revision{a global alarm } based on the largest $r$ values of the local statistics $R_{k,t}$ in (\ref{eq.R}), and is defined as the stopping time
\begin{equation}\label{stoppingtime}
T=\inf\{ t \ge 1: \sum_{k=1}^{r}R_{(k),t} \ge A\},
\end{equation}
where $r$ is a pre-specified parameter, and $A$ is a pre-specified constant so as to satisfy the false alarm constraint  in (\ref{falsealarm}). Here $R_{(1),t}\ge \ldots \ge R_{(k),t} \ge \ldots \ge R_{(K),t}$ denote the decreasing order of the local statistics  $R_{k,t}$ in (\ref{eq.R}).

We should acknowledge that there are many other ways to raise a global decision. For instance, for local statistics in the summation, we can use the randomized version $R_{(k),t}^*$ or the logarithm version $\log R_{(k),t}.$ Moreover, there are different ways to use the shrinkage transformation to combine local statistics to raise a  global alarm; see \cite{mei2011quickest} and \cite{liu2016scalable}. Based on our extensive simulation experiences, the stopping time in (\ref{stoppingtime}) is stable and outperforms other stopping rules in most cases. The discussion on comparison with other stopping rules is deferred to the supplementary material Section 3.

\subsubsection{Summary of Proposed Algorithm} \label{algsummary}

We summarize the proposed Thompson-Sampling-Shiryaev-Roberts-Pollak (TSSRP) in Algorithm \ref{TSSRPalg}.

\begin{algorithm}[t]
	\caption{Thompson-Sampling-Shiryaev-Roberts-Pollak (TSSRP) algorithm}
	\label{TSSRPalg}
	\textbf{Parameters}:  the number $r$, the number of observed sensors $q$, a prior distribution $G$ and the stopping threshold $A$.\\
	\textbf{Input}: $K$ data streams\\
    \textbf{Initialize:} Set $R_{k,t}=0$, $L_{k,t}=1$, and sample $\tilde{R}_{k,t}$ from $G$ for all $k=1,2,\ldots,K$. Randomly sample $q$ data streams as the initial layout $S_1$\\
	\textbf{Algorithm}:
	In each round $t \leftarrow 1,2,\ldots$ do the following:\\
	(1)  based on the current sensor layout $S_t$, recursively update two-dimensional local statistics  $(R_{k,t}, L_{k,t})$ in (\ref{eq.R}) and (\ref{eq.likelihood})\\
	(2) For each data stream $k$, sample the ``initial" value $\tilde{R}_{k,t}$ from $G$, and calculate the local sampling statistics $R^*_{k,t}$  in (\ref{eq.random})\\
	(3) Order the local sampling statistics  $R^*_{k,t}$ $k=1,2,\ldots,K$, from the largest to the smallest, and let $l_{(k),t}$ denote the variable index of the order statistics $R^*_{(k),t}$\\
	(4)  Update the sensor layout = $\{l_{(1),t},\ldots, l_{(q),t}\}$  \\
    \revision{ (5) Check if the criterion of the stopping time in (\ref{stoppingtime}) is reached. If yes, stop and raise a global alarm. If not, proceed to the next iteration.}
\end{algorithm}

\revision{Our proposed TSSRP method is not only statistical efficient as a limiting Bayesian procedure that is able to incorporate prior knowledge of potential changes, but also computationally scalable. First, it requires only $3K$ registers for retaining relevant information of $(R_{k,t}, L_{k,t}, \tilde{R}_{k,t})$ about the $K$ local processes: the first two are on the observed data regarding the local change, and the last is on the prior knowledge of the local change. Second, since the two-dimensional local statistics $(R_{k,t},L_{k,t})$ can be computed recursively and the ``initial" values  $\tilde{R}_{k,t}$ can be sampled directly from the prior distribution $G$ at each time step, the computational cost of our TSSRP method is linear to the number $K$ of local data streams. Thus our method can be easily implemented for real-time monitoring.
}

\subsection{Choice of Parameters}\label{parameters}

The TSSRP algorithm involves several parameters. Below we will discuss the choice of these parameters.

\textbf{Choice of the prior distribution}:  In practice, we could choose the prior distribution according to our prior knowledge. For example, in the manufacturing process, we may know that certain production lines could have a higher chance of being out of control. Alternatively, if no prior knowledge is present,  we could choose some non-informative priors such as the uniform distribution or the point mass $0$  distribution, i.e., $P(\tilde{R}_{k,t}=0)=1$. In the latter case, it reduces to a greedy sampling algorithm without randomization. 
In our numerical simulation studies in Section \ref{experiment}, we compare the performance of TSSRP with four different priors. The results suggest that the TSSRP significantly reduces the average detection delay regardless of the choice of the priors and that a valid prior can further improve the performance.

We note that \cite{pollak1985optimal} also investigates the choice of the prior distribution $G$, but under a different context in which the randomized  Shiryaev-Roberts statistics leads to an equalizer stopping time $T$ in that sense that $\mathbb{E}_{\nu}(T-\nu | T \ge \nu)$ is constant as a function of candidate change-point $\nu.$ Unfortunately, it is generally challenging to find an explicit solution for such prior distribution. Nevertheless, our purpose of the randomization is different from \cite{pollak1985optimal}: ours is for balancing the exploration and exploitation, while theirs is for almost minimax property.

\textbf{Choice of $r$}:
Intuitively, the tuning parameter $r$ in the stopping time (\ref{stoppingtime}) decides how many local sensors should be involved in the final decision making. Thus an ideal choice should be a plausible approximation of the actual number of changed data streams. If $r$ is much smaller than the actual number of changed data streams, our final decision will not use all information that the data might provide. If $r$ is much larger, our final decision will involve unnecessary noisy local statistics and lead to poor performance.
	Meanwhile, from the in-control performance viewpoint, as discussed in \cite{WuYanhong:2019}, the Shiryaev-Roberts statistic is heavy-tailed under the in-control hypothesis, and thus there is no practice difference between a smaller value of $r$ (e.g., $r=3$) to a larger value of $r$ (e.g., $r= q$). Thus, when the total number of changed data streams is unknown, one might simply choose $r=q.$

\textbf{Choice of $A$}: The parameter $A$ is the stopping threshold that controls the average run length to false alarm of our method, which is analogous to controlling the type I error. In practice, the threshold $A$ is usually determined by Monte-Carlo simulation. One often uses the bisection method to find the smallest $A$ so that the proposed method satisfies the false alarm constraint in (\ref{falsealarm}).

\subsection{Theoretical Properties of TSSRP}\label{property_dis}

In this subsection, we provide some theoretical properties of the TSSRP algorithm: Theorem \ref{prop1} establishes the close relationship between our algorithm and the Bayesian procedures.  Theorem \ref{ARL} investigates the in-control average run length properties, whereas Theorem \ref{prop_incontrol} and Theorem \ref{prop_outcontrol} provide a deep understanding of the sensor layouts.

\revision{First, the following theorem provides theoretical bases for the proposed adaptive sampling in  our TSSRP algorithm.

\begin{theorem}\label{prop1}
Assume that the change time $\nu_k$ for the $k$-th data stream has a prior Geometric($p$) distribution in (\ref{bayesian.review}), where the initial prior probability $\Pi_{k,0}$ has a prior $G_{p}$ distribution. Suppose that $G_{p} / p \rightarrow G$ in distribution, then
$R^*_{k,t}$ in (\ref{eq.random}) with its random component $\tilde{R}_{k,t} \sim G$ has the same distribution as
	\begin{equation}\label{eq.random1}
	\lim_{p \to 0} \frac{\Pi_{k,t}}{ p(1-\Pi_{k,t})},
	\end{equation}
where $\Pi_{k,t} = {\bf P}(\nu_k \le t | X_{k,1}^{*}, \cdots, X_{k,t}^*)$ is the posterior estimation how likely a local change occurs, and $X_{k,t}^{*} = X_{k,t} \delta_{k,t}$ denotes the observed data.
\end{theorem}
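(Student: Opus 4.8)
The plan is to obtain a closed form for the ``limiting posterior odds'' $\Pi_{k,t}/\bigl(p(1-\Pi_{k,t})\bigr)$ by a direct Bayes computation, and then let $p\to0$ and match the result term by term against $R^{*}_{k,t}=R_{k,t}+L_{k,t}\tilde R_{k,t}$. As a preliminary I would set $\ell_{k,i}=f_{\theta_{k,1}}(X_{k,i})/f_{\theta_{k,0}}(X_{k,i})$ when $\delta_{k,i}=1$ and $\ell_{k,i}=1$ when $\delta_{k,i}=0$; this is precisely the ``treat a missing likelihood ratio as $1$'' convention hard-wired into the recursions (\ref{eq.R})--(\ref{eq.likelihood}). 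A one-line induction on $t$ then yields the closed forms $L_{k,t}=\prod_{i=1}^{t}\ell_{k,i}$ and $R_{k,t}=\sum_{j=1}^{t}\prod_{i=j}^{t}\ell_{k,i}$, i.e.\ the usual product of likelihood ratios and the Shiryaev--Roberts partial sum, but with unobserved increments replaced by $1$.

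Next I would compute the posterior. Since the observed sample $X_{k,1}^{*},\dots,X_{k,t}^{*}$ depends on $\nu_k$ only through which of the observed indices fall before or after the change, the likelihood ratio of $\{\nu_k=j\}$ against an all-pre-change reference equals $\prod_{i=j}^{t}\ell_{k,i}$ for $1\le j\le t$ (which is $L_{k,t}$ when $j\le1$) and equals $1$ for $j>t$. Inserting the prior (\ref{bayesian.review}) into Bayes' rule and summing the geometric tail $\sum_{j>t}p(1-p)^{j-1}=(1-p)^{t}$ produces, after rearrangement, the exact identity valid for every $p\in(0,1)$ and every data path
\begin{equation*}
\frac{\Pi_{k,t}}{p\,(1-\Pi_{k,t})}=\frac{\Pi_{k,0}}{p\,(1-\Pi_{k,0})}\cdot\frac{L_{k,t}}{(1-p)^{t}}+\sum_{j=1}^{t}(1-p)^{\,j-1-t}\Bigl(\prod_{i=j}^{t}\ell_{k,i}\Bigr).
\end{equation*}

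Now I would let $p\to0$ with $t$ held fixed. Every power $(1-p)^{t}$ and $(1-p)^{\,j-1-t}$ tends to $1$, so the sum converges to $\sum_{j=1}^{t}\prod_{i=j}^{t}\ell_{k,i}=R_{k,t}$ and the coefficient $L_{k,t}/(1-p)^{t}$ converges to $L_{k,t}$. For the remaining factor, $\Pi_{k,0}=p\cdot(\Pi_{k,0}/p)\to0$ so $1-\Pi_{k,0}\to1$, and the hypothesis $G_p/p\to G$ forces $\Pi_{k,0}/p$ to converge in distribution to a variable with law $G$, i.e.\ to a copy of $\tilde R_{k,t}$. Hence $\lim_{p\to0}\Pi_{k,t}/\bigl(p(1-\Pi_{k,t})\bigr)$ has the law of $R_{k,t}+L_{k,t}\tilde R_{k,t}=R^{*}_{k,t}$, which is the claim.

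The one step that needs genuine care is this last distributional limit, because a priori $\Pi_{k,0}$ is correlated with the observed data through $\nu_k$, whereas in the TSSRP algorithm $\tilde R_{k,t}$ is drawn independently of the data; to conclude I must know that the limit really is the affine image $r\mapsto R_{k,t}+L_{k,t}r$ of a $G$-distributed variable independent of $(R_{k,t},L_{k,t})$. The way I would handle it is to observe that $P(\nu_k\le t\mid\Pi_{k,0})=\Pi_{k,0}+(1-\Pi_{k,0})\bigl(1-(1-p)^{t}\bigr)\to0$, so in the limit the first $t$ observations are distributed as if no change had occurred and decouple from $\Pi_{k,0}$, while conditioning on $\{\nu_k>t\}$ merely reweights $G_p$ by $1-\Pi_{k,0}\to1$ and therefore leaves the limiting law of $\Pi_{k,0}/p$ unchanged. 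Phrasing the theorem as an identity of conditional laws given the observed data path makes this transparent and reduces everything to the elementary algebra above together with the continuous mapping theorem applied to $G_p/p\to G$.
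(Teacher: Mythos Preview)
Your argument is correct and lands in the same place as the paper's, but the route differs in presentation. The paper works \emph{recursively}: it derives a one-step Bayes update for $\Pi_{k,t}$ in the observed and unobserved cases, rewrites this as a recursion for $R^{*}_{p,k,t}:=\Pi_{k,t}/\bigl(p(1-\Pi_{k,t})\bigr)$, sends $p\to0$ in the recursion to recover the Shiryaev--Roberts update (\ref{eq.R}) with a random initial value, and then invokes a separate induction to show that any solution of that recursion with initial value $R^{*}_{k,0}$ equals $R_{k,t}+L_{k,t}R^{*}_{k,0}$. You instead compute the posterior odds in \emph{closed form} for finite $p$, which already displays the affine structure in $\Pi_{k,0}/\bigl(p(1-\Pi_{k,0})\bigr)$, so the limit and the identification with $R^{*}_{k,t}$ follow in one stroke without the auxiliary induction. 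Your approach is slightly more direct and has the bonus of an exact pre-limit identity; the paper's recursive version is closer in spirit to how the algorithm is actually implemented.

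One point where you go further than the paper: you explicitly flag that in the Bayesian model $\Pi_{k,0}$ and the observed data are dependent through $\nu_k$, whereas in TSSRP the draw $\tilde R_{k,t}\sim G$ is independent of the data, so matching marginal laws of the initial value is not by itself enough to conclude equality of the \emph{joint} law of $(R_{k,t},L_{k,t},\text{initial value})$. The paper's proof glosses over this, simply asserting that since $\tilde R_{k,t}$ and $R^{*}_{k,0}$ share the law $G$ the identity follows. Your resolution---that $P(\nu_k\le t\mid\Pi_{k,0})\to0$ forces asymptotic decoupling, and that the reweighting of $G_p$ by $1-\Pi_{k,0}$ is asymptotically trivial---is the right way to close this gap.
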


The detailed proof of Theorem \ref{prop1} is presented in the Appendix A.1 in online supplementary material. By this theorem, sampling based on the largest values of local statistics $R^*_{k,t}$ in (\ref{eq.random}) is mathematically equivalent to sampling based on the largest values of random realizations of the posterior distribution $\Pi_{k,t}$ as $p \to 0,$ since $\frac{u}{p(1-u)}$ is a monotonic increasing function of $u.$ Hence, the adaptive sampling policy in the TSSRP algorithm is the limit of the Thompson Sampling policy.
}

Second, we investigate the ARL to false alarm of the proposed TSSRP algorithm in the following theorem.

\begin{theorem}[Average Run Length to False Alarm]\label{ARL}
$\mathbb{E}_{\infty}(T)\ge A/K$. Moreover, $\mathbb{E}_{\infty} (T) = O(A) $, where $O(A)/A$ is bounded as $A\rightarrow \infty$.
\end{theorem}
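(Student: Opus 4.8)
The plan is to build everything on one structural fact under the no-change measure $\mathbb{P}_\infty$: for each fixed $k$, the local statistic in (\ref{eq.R}) satisfies $\mathbb{E}_\infty[R_{k,t}\mid\mathcal{F}_{t-1}]=R_{k,t-1}+1$, where $\mathcal{F}_{t-1}$ is generated by the observed data together with the auxiliary draws $\tilde R_{\cdot,s}$, $s\le t-1$. Indeed, the layout $S_t$ (hence each indicator $\delta_{k,t}$) is determined at the end of round $t-1$, so it is $\mathcal{F}_{t-1}$-measurable; and whenever $\delta_{k,t}=1$ the observation $X_{k,t}$ has density $f_{\theta_{k,0}}$ and is independent of $\mathcal{F}_{t-1}$, so $\mathbb{E}_\infty[\Lambda_{k,t}\mid\mathcal{F}_{t-1}]=1$ for the likelihood ratio $\Lambda_{k,t}:=f_{\theta_{k,1}}(X_{k,t})/f_{\theta_{k,0}}(X_{k,t})$; substituting this into either branch of (\ref{eq.R}) gives the claim, and by induction $\mathbb{E}_\infty R_{k,t}=t<\infty$. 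Hence $M_t:=\sum_{k=1}^{K}R_{k,t}-Kt$ is a $\mathbb{P}_\infty$-martingale started at $0$, and $\Sigma_t:=\sum_{k=1}^{K}R_{k,t}$ is a nonnegative submartingale with compensator $Kt$. Since every $R_{k,t}\ge 0$, the statistic driving the stopping rule is sandwiched as $\frac{r}{K}\,\Sigma_t\le\sum_{k=1}^{r}R_{(k),t}\le\Sigma_t$ for all $t$ (the mean of the $r$ largest terms is at least the overall mean).

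For the lower bound, apply optional stopping to $M_t$ at the bounded time $T\wedge n$, which gives $\mathbb{E}_\infty[\Sigma_{T\wedge n}]=K\,\mathbb{E}_\infty[T\wedge n]$. On $\{T\le n\}$ we have $\Sigma_{T\wedge n}=\Sigma_{T}\ge\sum_{k=1}^{r}R_{(k),T}\ge A$ by the definition of $T$, while on $\{T>n\}$ the term $\Sigma_{T\wedge n}$ is nonnegative; therefore $\mathbb{E}_\infty[\Sigma_{T\wedge n}]\ge A\,\mathbb{P}_\infty(T\le n)$, i.e.\ $\mathbb{E}_\infty[T\wedge n]\ge(A/K)\,\mathbb{P}_\infty(T\le n)$. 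Letting $n\to\infty$ by monotone convergence gives $\mathbb{E}_\infty(T)\ge(A/K)\,\mathbb{P}_\infty(T<\infty)$; when $\mathbb{P}_\infty(T<\infty)<1$ the left side is infinite, so in all cases $\mathbb{E}_\infty(T)\ge A/K$.

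For the upper bound, the sandwich above gives $\sum_{k=1}^{r}R_{(k),t}\ge\frac1K\Sigma_t$, so $T\le T':=\inf\{t\ge 1:\Sigma_t\ge KA\}$ and it suffices to prove $\mathbb{E}_\infty(T')=O(A)$. This is the genuinely hard part, and it cannot be had by an optional-stopping manipulation on $\Sigma_t$: such manipulations only reproduce lower bounds, and because of the multiplicative updates $R_{k,t}=(R_{k,t-1}+1)\Lambda_{k,t}$ the law of $\Sigma_t$ is heavy tailed (one has $\mathbb{E}_\infty\Sigma_t=Kt$ while its higher moments may grow geometrically in $t$), so the overshoot $\Sigma_{T'}-KA$ is not uniformly integrable and Wald's identity is unavailable. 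The route I would follow is a regenerative/Harris-recurrence analysis of the Markov chain $(R_{1,t},\dots,R_{K,t})$ under $\mathbb{P}_\infty$ (the adaptive sampling being a measurable function of the state and of the fresh draws $\tilde R_{\cdot,t}$): (i) show the chain is positive Harris recurrent, each coordinate being tight because large values get sampled down --- whenever a coordinate is the largest it is picked by the Thompson step, and once sampled its logarithm has the strictly negative $\mathbb{P}_\infty$-drift $\mathbb{E}_\infty\log\Lambda_{k,t}=-D_k<0$; (ii) show the stationary law $\pi$ has the Kesten-type tail $\pi(\{\Sigma\ge B\})\asymp 1/B$, dominated by a single large coordinate, the exponent $1$ coming from $\mathbb{E}_\infty[\Lambda_{k,t}]=1$; (iii) show that excursions of $\Sigma$ above a level $B$ have $\mathbb{P}_\infty$-expected length bounded uniformly in $B$, since during such an excursion a coordinate of size $\gtrsim B/K$ is sampled and its logarithm decays at the uniform rate $\min_k D_k>0$; and (iv) combine (i)--(iii) by a regenerative/Kac-type argument to get $\mathbb{E}_\infty(T')\asymp(\text{excursion length})/\pi(\{\Sigma\ge KA\})=O(1)\cdot O(A)=O(A)$. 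A more heuristic route is to use that each $R_{k,t}$ dominates pathwise a classical Shiryaev--Roberts statistic run on the subsequence of its sampling times (the intervening ``$+1$'' steps only help) together with the known one-stream fact \cite{pollak1987average,tartakovsky2014sequential} that such a statistic crosses $A$ in expected time $A(1+o(1))$; turning this into a bound on $\mathbb{E}_\infty(T')$ still requires controlling how often each stream is sampled, so either way the real work is the drift and tightness bookkeeping above --- delicate, but needing no ideas beyond the strict negativity of $\mathbb{E}_\infty\log\Lambda_{k,t}$.
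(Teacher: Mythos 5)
Your lower bound is correct and is essentially the paper's argument: the paper also observes that $\sum_{k=1}^{K}R_{k,t}-Kt$ is a $\mathbb{P}_\infty$-martingale (because the conditionally mean-one likelihood-ratio multiplier, or the additive $+1$, both preserve $\mathbb{E}_\infty[R_{k,t}\mid\mathcal{F}_{t-1}]=R_{k,t-1}+1$), applies optional stopping, and uses $\sum_{k=1}^{r}R_{(k),T}\le\sum_{k=1}^{K}R_{k,T}$ to get $K\,\mathbb{E}_\infty[T]\ge A$. Your version with $T\wedge n$ and monotone convergence is, if anything, more careful than the paper's.

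The upper bound, however, is a genuine gap: you reduce to showing $\mathbb{E}_\infty(T')=O(A)$ for $T'=\inf\{t:\Sigma_t\ge KA\}$ and then only sketch a Harris-recurrence/Kesten-tail/regenerative program without carrying out any of steps (i)--(iv); as written, half of the theorem is unproved. Moreover, the reduction to the \emph{sum} $\Sigma_t$ makes the problem harder than it needs to be. The paper's route is the opposite sandwich inequality from the one you exploit: for any single fixed stream $k$ one has $R_{k,t}\le\sum_{j=1}^{r}R_{(j),t}$, so the global alarm fires no later than $T'_k=\inf\{t\ge1:R_{k,t}\ge A\}$, and it suffices to bound the expected crossing time of level $A$ by \emph{one} coordinate. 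That coordinate obeys the Shiryaev--Roberts-type recursion $R_{k,t}=(R_{k,t-1}+1)Y_{k,t}$ with $Y_{k,t}$ equal to the likelihood ratio on observed steps and to $1$ on unobserved steps (the latter only push the statistic up, deterministically by $+1$ per step), and the paper then invokes Theorem 1 of \cite{pollak1987average} for $\mathbb{E}_\infty[T'_k]=O(A)$. So the ``genuinely hard part'' you identify is, in the paper, a one-line domination by a single-stream stopping time plus a citation to the known one-stream ARL result --- precisely the ``heuristic route'' you mention at the end but discard because you were still trying to control the sum $\Sigma_t$ rather than a single coordinate. If you want to follow the paper, replace your $T'$ by $T'_k$ for a fixed $k$ and supply (or cite) the single-stream bound; your regenerative program, even if completable, is far more machinery than the statement requires.
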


Theorem \ref{ARL} provides us guidance to select conservative upper and lower bounds of $A$. Specifically, $K\cdot ARL$ can serve as the upper bound in the bisection search to speed up the threshold choosing procedure.  The detailed proof of Theorem \ref{ARL} is given in Appendix A.2 in online supplementary material. Unfortunately, it remains an open problem to derive the bounds on the average detection delays.

Next, we investigate the properties for sensor layouts. Theorem \ref{prop_incontrol} shows that the sensor layout will go through all the data streams eventually when the system is in control.

\begin{theorem}\label{prop_incontrol}
	Let $S_t$ be the sensor layout at time $t$. Then under $H_0: \nu=\infty$, there exsits a $ t'>t$ such that  $P(k \in S_{t'})>0$, for each $t > 0$ and each $ k , 1\le k\le K$.
\end{theorem}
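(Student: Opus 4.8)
The plan is a proof by contradiction. Suppose that for some index $k$ and some $t\ge 1$ we had $P(k\in S_{t'})=0$ for every $t'>t$. Intersecting these countably many probability-one events, almost surely $k$ is never selected after time $t$, so by the update rule (\ref{eq.R}) the statistic $R_{k,s}=R_{k,t}+(s-t)$ grows linearly and tends to $\infty$, while $L_{k,s}\equiv L_{k,t}$ is frozen; since $G$ is supported on the nonnegative reals, (\ref{eq.random}) then gives $R^{*}_{k,s}\ge R_{k,s}\ge s-t$. Consequently, because $S_{s+1}$ consists of the $q$ largest values among $\{R^{*}_{j,s}\}_j$ and excludes $k$, every index selected at time $s+1$ has $R^{*}$-value at least $R^{*}_{k,s}\ge s-t$; in particular at least $q+1$ indices (the $q$ sampled ones together with $k$) lie in the ``high set'' $H_s:=\{j:\ R^{*}_{j,s}\ge s-t\}$ for every $s\ge t$.

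The contradiction should come from the fact that, under $H_0$, sampling a stream drives its Shiryaev--Roberts statistic \emph{down} on average: since $X_{j,s}$ is distributed as $f_{\theta_{j,0}}$, Jensen's inequality gives $\mathbb{E}_{\infty}\big[\log \tfrac{f_{\theta_{j,1}}(X_{j,s})}{f_{\theta_{j,0}}(X_{j,s})}\big]=-\kappa_j<0$, where $\kappa_j>0$ is the Kullback--Leibler divergence from $f_{\theta_{j,1}}$ to $f_{\theta_{j,0}}$ (assumed finite, and positive since the pre- and post-change densities differ); put $\kappa:=\min_j\kappa_j$. I would then track the Lyapunov function $\Psi_s:=\sum_{j\in H_s}\log R_{j,s}$, which is legitimate because $R_{j,s}$ is comparable to $R^{*}_{j,s}\ge s-t$ on $H_s$, and which is bounded below by roughly $(q+1)\log(s-t)\to\infty$. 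One step of the recursion contributes, in conditional expectation, at most $-\kappa/2$ for each of the $q$ sampled indices in $H_s$ (either their $\log R$ drops by about $\kappa$, or they fall out of the high set, which costs even more), and only $O(1/(s-t))$ from each of the remaining high-set indices; hence for all sufficiently large $s$, $\mathbb{E}_{\infty}[\Psi_{s+1}-\Psi_s\mid\mathcal F_s]\le -q\kappa/4$ on the event in question, where $\mathcal F_s$ is the filtration generated by the observations and randomizations through time $s$. Thus $\Psi_s+\tfrac{q\kappa}{4}s$ is eventually a supermartingale, so has bounded expectation, whereas $\Psi_s\ge 0$ forces its expectation to exceed $\tfrac{q\kappa}{4}s\to\infty$ --- a contradiction, which yields $P(k\in S_{t'})>0$ for some $t'>t$.

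The hard part will be making the one-step drift estimate watertight in the presence of the fresh randomization $\tilde R_{j,s+1}$, which for a non-degenerate prior $G$ lets a starved stream ``jump into'' $H_{s+1}$ and add a positive spike to $\Psi$; I expect this to be controlled by the fact that, under $H_0$, each $L_{j,\cdot}$ is a nonnegative $\mathcal F_s$-martingale, hence converges almost surely and is path-bounded, so the boost $L_{j,s}\tilde R_{j,s}$ cannot carry a stream across the (growing) threshold $s-t$ from far below once $s$ is large. Two further technical points are (i) the event ``$k$ is never sampled again'' is not $\mathcal F_s$-measurable, so one should localize at the stopping time $\tau:=\inf\{s>t:\ k\in S_s\}$ (forced to be $+\infty$ almost surely under the contradiction hypothesis) and carry $I(\tau>s)$ through the estimates, and (ii) one needs the mild regularity $0<\kappa_j<\infty$ for all $j$. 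For $G$ equal to a point mass at $0$ (the greedy case) the high set $H_s$ is automatically nonincreasing and the argument simplifies considerably; and if $G$ has unbounded support the conclusion is immediate, since $L_{k,t}>0$ almost surely, so conditioning on all randomness except $\tilde R_{k,t}$ shows that with positive probability $R^{*}_{k,t}=R_{k,t}+L_{k,t}\tilde R_{k,t}$ dominates every other (almost surely finite) $R^{*}_{j,t}$, forcing $k\in S_{t+1}$.
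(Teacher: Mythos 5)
Your overall strategy is the same as the paper's: argue by contradiction, note that a never-again-sampled stream has $R_{k,s}=R_{k,t}+(s-t)\to\infty$ while its $L_{k,s}$ is frozen, observe that the top-$q$ selection rule forces every sampled stream to carry an $R^*$-value at least as large as the starved stream's (this is exactly the paper's Lemma \ref{lemma2}), and then derive a contradiction from the fact that under $H_0$ the sampled streams' log-likelihood increments have negative mean $-\kappa_j<0$. Your closing remark for unbounded-support $G$ (a fresh draw of $\tilde R_{k,t}$ dominates everything with positive probability, so $k\in S_{t+1}$ with positive probability) is correct, immediate, and not in the paper.

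The genuine gap is in the vehicle you chose for the drift argument. The one-step supermartingale bound $\mathbb{E}_\infty[\Psi_{s+1}-\Psi_s\mid\mathcal F_s]\le -q\kappa/4$ for $\Psi_s=\sum_{j\in H_s}\log R_{j,s}$ cannot hold uniformly, because the high set $H_s=\{j:R^*_{j,s}\ge s-t\}$ is not monotone and entries are not rare. For an \emph{unsampled} stream $j'$, both $R_{j',s}$ and the threshold $s-t$ increase by exactly $1$ per step, so membership in $H_s$ reduces to the event $L_{j',s}\tilde R_{j',s}\ge c_{j'}$ for a path-constant $c_{j'}$; with a non-degenerate prior $G$ this event is re-randomized every step, so a boundary stream flips in and out of $H_s$ with probabilities bounded away from $0$ and $1$. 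Each entry contributes $+\log R_{j',s+1}\approx+\log(s-t)$ to $\Psi_{s+1}-\Psi_s$, which is unbounded in $s$, so the conditional drift at such steps is positive and growing, not $\le -q\kappa/4$. Your proposed fix (path-boundedness of $L_{j,\cdot}$ prevents jumps ``from far below'') addresses only streams far from the threshold, not the oscillating boundary streams, which are the ones that break the bound. To salvage the Lyapunov route you would have to freeze the index set or replace $\Psi$ by something insensitive to membership changes. The paper sidesteps all of this bookkeeping: Lemma \ref{lemma2} gives the pairwise dominance $R^*_{k',s}\ge R^*_{k,s}$ for every sampled-type stream $k'$ and every starved stream $k$, and then it suffices to show that for a single observed stream the cumulative log-likelihood random walk $Z_n=\sum Y_{k',t}$, having negative drift, satisfies $P_{\infty}(\inf_n Z_n\ge 0)=0$ — so that stream's statistic cannot forever stay above the linearly growing $R^*_{k,s}$. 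I recommend replacing your global Lyapunov function with this per-stream comparison; the rest of your setup (the contradiction hypothesis, the linear lower bound $R^*_{k,s}\ge s-t$, the role of $\kappa_j>0$) carries over intact.
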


Theorem \ref{prop_incontrol} implies that each variable has a chance to be explored, regardless of the sensor deployments in previous steps when no changes occur. In other words, the algorithm performs similar to random sampling in the in-control state.

Finally, Theorem \ref{prop_outcontrol} below implies that the sensor layout of the TSSRP algorithm will eventually converge to the changed data streams when the system is out of control.

\begin{theorem}\label{prop_outcontrol}
	Let $S_t$ be the sensor layout at time $t$. Then under $H_1: \nu<\infty$, we have $P(k\in S_t, \forall t>t_0|k\in S_{t_0})>0$ for all changed data stream $k$, and all $t_0>\nu$.
\end{theorem}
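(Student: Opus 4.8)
The plan is to exhibit, for the fixed affected stream $k$ and the fixed $t_0>\nu$, one event of strictly positive probability on which $k$ is never dropped from the sensor layout after $t_0$. Working on $\{k\in S_{t_0}\}$, this event will be an intersection of three pieces: (a) the post-change observations on $k$ force $R^*_{k,t}$ to grow geometrically in $t$; (b) every competing statistic $R^*_{j,t}$, $j\neq k$, grows at most polynomially; (c) an initial ``head start'' putting $R^*_{k,t_0}$ ahead of all but at most $q-1$ competitors. Granting (a)--(c), a one-line induction keeps $k$ in the top $q$ forever, which is exactly the claim.

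For (a): since $t_0>\nu$ and $k$ is affected, whenever $\delta_{k,t}=1$ with $t\ge t_0$ the value $X_{k,t}$ is post-change, so the increments $\log\Lambda_{k,t}:=\log\{f_{\theta_{k,1}}(X_{k,t})/f_{\theta_{k,0}}(X_{k,t})\}$ (defined on the underlying sequence, independently of sampling) are i.i.d.\ with mean the post-change Kullback-Leibler information $I_k\in(0,\infty)$ (finite and positive under the paper's standing assumptions on $f_{\theta_{k,0}}\neq f_{\theta_{k,1}}$). Fix $c\in(0,I_k)$ with $P(\log\Lambda_{k,t}\ge c)>0$ and put
\[
B_k=\Big\{\,\textstyle\sum_{i=t_0}^{s}\log\Lambda_{k,i}\ \ge\ c\,(s-t_0+1)\ \ \text{for all }s\ge t_0\,\Big\},
\]
which is precisely the event that the mean-positive random walk with increments $\log\Lambda_{k,i}-c$ never returns below its start; standard fluctuation theory for random walks with positive drift (the Sparre Andersen / Spitzer identities) gives $P(B_k)>0$. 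On $B_k$, provided $\delta_{k,i}=1$ for $i=t_0,\dots,s$, recursion~(\ref{eq.R}) gives $R_{k,s}\ge\prod_{i=t_0}^{s}\Lambda_{k,i}\ge e^{c(s-t_0+1)}$, hence $R^*_{k,s}\ge e^{c(s-t_0+1)}$ since $L_{k,s},\tilde R_{k,s}\ge0$.

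For (b)+(c): I would intersect $B_k$ with an event $D$ on which each competitor satisfies $R^*_{j,s}\le C_j\,s^2$ for all $s$ and some finite, history-measurable $C_j$: a stream that stops being observed at $s_j$ has $R_{j,s}=R_{j,s_j}+(s-s_j)$ linear with $L_{j,s}$ frozen; an unaffected stream observed forever follows, under $H_1$, the in-control Shiryaev-Roberts recursion, whose running maximum up to $s$ is $O(s^a)$ with overwhelming probability despite the power-law tails, hence polynomially bounded on a positive-probability event; and only the other affected streams can grow geometrically, of which — in the sparse regime with at most $q$ affected streams — there are at most $q-1$. The resampled terms $L_{j,s}\tilde R_{j,s}$ are controlled uniformly in $s$ trivially when $G$ is compactly supported, and otherwise by Borel-Cantelli on $\{\tilde R_{j,s}>e^{c(s-t_0)}/L_{j,s}\}$, whose probabilities are summable because $L_{j,s}\to0$ geometrically for observed unaffected streams. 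Conditioning on the $\sigma$-field of everything up to $t_0$ together with $D$ fixes a finite threshold $\Lambda^\dagger$ so that $\log\Lambda_{k,t_0}\ge\log\Lambda^\dagger$ makes $R^*_{k,t_0}$ exceed every competitor except the at most $q-1$ other affected ones, and — since geometric growth overtakes $C_j\,s^2$ once it has this head start — the ordering is preserved for all later $s$; since $\Lambda_{k,t_0}$ is an unbounded likelihood ratio in the canonical exponential-family settings, this last event has positive conditional probability and is independent of $D$ and the past. The induction is then immediate: if $k\in S_s$ for $t_0\le s\le t$, then $B_k$ yields $R^*_{k,t}\ge e^{c(t-t_0+1)}$ while on $D$ together with the head-start event at most $q-1$ indices $j\neq k$ have $R^*_{j,t}>R^*_{k,t}$, so $k$ is in the top $q$ and $k\in S_{t+1}$.

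The main obstacle is the circular dependence between ``$k$ is sampled'' and ``$R_{k,t}$ is large'': the geometric bound for $R_{k,t}$ is valid only while $k$ is sampled, while $k$ is sampled only because $R^*_{k,t}$ is large. This is broken by making $B_k$ a single ``for all $s$'' event, so the induction never has to re-establish the growth rate, together with the head-start event that wins the first post-$t_0$ comparison. What remains — uniform-in-time control of the resampled $\tilde R_{j,t}$, the mildly heavy-tailed running maximum of the in-control statistic, and the case of more than $q$ changed streams (where the claim can only be posed for the affected $k$ with the largest $I_k$, so one first orders affected streams by their Kullback-Leibler rates) — is bookkeeping rather than conceptual.
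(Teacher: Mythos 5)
Your core engine is the same one the paper uses: for the affected stream $k$, the post-change log-likelihood-ratio walk $Z_n=\sum_{i=t_0}^{t_0+n}\log\{f_{\theta_{k,1}}(X_{k,i})/f_{\theta_{k,0}}(X_{k,i})\}$ has positive drift, so by standard fluctuation theory it stays above a (linear) barrier forever with positive probability; on that event $R_{k,t}$ grows geometrically and $k$ is never displaced from the top $q$. The paper gets this from Corollary 8.44 of Siegmund and stops essentially there. Where you differ is in the surrounding architecture, and most of it is unnecessary once you use the key structural fact: to stay in $S_{t+1}$, stream $k$ only needs to remain in the top $q$ of $R^*_{\cdot,t}$, and the only streams that can newly overtake it are the \emph{currently unobserved} ones, whose $R_{j,t}$ update by $R_{j,t-1}+1$ with $L_{j,t}$ frozen by (\ref{eq.R})--(\ref{eq.likelihood}), i.e.\ grow additively. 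The other $q-1$ observed streams may sit above $k$ without harming the claim. This kills your steps (b) and (c) almost entirely: there is no need for polynomial bounds on in-control observed streams, no head-start event, no appeal to unbounded likelihood ratios, and no need to order affected streams by their Kullback--Leibler rates. In particular, your suggestion that for more than $q$ changed streams ``the claim can only be posed for the affected $k$ with the largest $I_k$'' is a misreading: the theorem conditions on $k\in S_{t_0}$, and geometric growth beats the additive growth of every unobserved competitor regardless of how many other streams have changed, so the statement holds for every changed $k$ as written.

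Two pieces of your extra care are, however, genuinely useful and go beyond what the paper writes down. First, your strengthening of the walk event from $\{Z_n\ge 0\ \forall n\}$ to $\{Z_n\ge cn\ \forall n\}$ with $0<c<I_k$ is what actually makes $R^*_{k,t}$ outrun the $+1$-per-step additive growth of unobserved streams; the paper's one-line comparison implicitly needs this (or an appeal to the strong law on the survival event). Second, you are right that $\tilde R_{j,t}$ in (\ref{eq.random}) is freshly resampled from $G$ at every step, so an unobserved stream's $R^*_{j,t}$ is not monotone and could in principle spike; the paper's proof treats unobserved increments as deterministic and ignores this. Your Borel--Cantelli fix is the right idea, though note that for an unobserved stream $L_{j,t}$ is frozen rather than decaying, so summability should be argued from the tails of $G$ against the geometric lower bound $e^{c(t-t_0)}$ (trivial for the compactly supported priors the paper actually uses). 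In short: same key lemma as the paper, a correct and slightly more careful treatment of the comparison step, but wrapped in substantial machinery that the top-$q$ structure makes redundant, plus one unwarranted weakening of the claimed conclusion.
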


We have shown that the sensor layout will not stay on any unchanged data streams forever in Theorem \ref{prop_incontrol}. The sensors will eventually be redistributed to the affected data streams at a certain time. Theorem \ref{prop_outcontrol} states that once a sensor is deployed to the out-of-control data stream, then there is a nonzero probability that the sensor will stay on this data stream forever. This property ensures that the sensors will eventually keep monitoring the affected data streams. The proofs of Theorem \ref{prop_incontrol} and Theorem \ref{prop_outcontrol} are given in Appendix A.3 in online supplementary material.

In summary, our TSSRP algorithm admits a nice property that it does more exploration when the system is in control, while more exploitation when the system is out of control. Intuitively, such nice property comes from the structure of the local statistics. When we observe a new data, the recursive formula for local statistics involves the likelihood ratio $f_{\theta_{k,1}}(x_{k,t})/f_{\theta_{k,0}}(x_{k,t})$. Under the in-control state, the expectation of the increment of all local statistics at time $t$ is $1$, i.e., $\mathbb{E}[R^*_{k,t}]=t$, $\forall k, t$. Thus our adaptive sampling is similar to random sampling under the in-control state. Under the out-of-control state, the likelihood ratio under the post-change distribution will be more likely to be greater than one. Therefore, the affected data streams' local statistics will increase faster than those of unaffected data streams, which enables the sensor layout to converge to those affected local components.

\section{Simulation Experiments}\label{experiment}

In this section, we report the numerical performance of the proposed TSSRP algorithm and compare it with the existing algorithms. The general setting for our simulations is as follows. We consider monitoring $K=100$ independent data streams. We assume that $q=10$ out of $K=100$ data streams can be monitored at each time step. The nominal value for the ARL to a false alarm is fixed as $\mathbb{E}_\infty T =\gamma = 1000.$ We follow the classic approach (\citealp{xie2013sequential,xie2013change,mei2010efficient}) and report the average detection delay when the change occurs at time $\nu=1$. It is the most challenging setting because it is more difficult to detect when the change occurred at the very beginning.

We report our simulation results in two subsections, depending on different true generative models of the data.
Subsection \ref{gaussian} focuses on the statistical efficiency of the TSSRP algorithm when our prior knowledge on the candidate affected local streams and the Gaussian distribution of the data are valid, and Subsection \ref{tdist} considers the robustness of our algorithm under the mis-specified models. In our simulation studies below, all numerical results are based on $1000$ Monte Carlo replications.

For our proposed TSSRP algorithm, we consider four choices for the prior distribution $G$ on the initial  statistics $\tilde{R}_{k,t=0}$, which is the prior knowledge of how likely a local stream might be affected by the change:
\begin{description}
  \item[(i)] $G_{0}:$ Uniform $U[0.5,1]$ for the first ten local streams, and Uniform $U[0,0.5]$ for the remaining local streams;
  \item[(i)] $G_{1}:$ Uniform $U[0.5,1]$ for the first five local streams, and Uniform $U[0,0.5]$ for the remaining local streams;
  \item[(ii)] $G_{2}:$ Uniform $U[0,1]$ for all local streams;
  \item[(iii)] $G_{3}:$ the point mass $0$ for all local streams, i.e., $P_0=P(\tilde{R}_{k,t}=0)=1$.
\end{description}
We referred them to as $\text{TSSRP}(G_{0})$, $\text{TSSRP}(G_{1})$, $\text{TSSRP}(G_{2})$ and $\text{TSSRP}(G_{3})$, respectively.

We compare our TSSRP algorithms with the baseline Top-r Based Adaptive Sampling (TRAS) algorithm proposed by \cite{liu2015adaptive}. The TRAS algorithm first constructs a local CUSUM statistic for each observed variable. For the unobserved variables, the local statistics are updated by adding a compensation parameter $\Delta$. That is, under our notation, each local stream computes a local statistic
\begin{eqnarray} \label{eq.TRAS.cusum}
W_{k,t} = \left\{
  \begin{array}{ll}
    \max(W_{k,t-1}+ \log \frac{f_{\theta_{k,1}}(X_{k,t})}{f_{\theta_{k,0}}(X_{k,t})}, 0), & \hbox{if $\delta_{k,t}=1;$} \\
    w_{k,t-1}+ \Delta, & \hbox{if $\delta_{k,t}=0.$}
  \end{array}
\right.
\end{eqnarray}
where $\Delta$ is the so-called compensation parameter for unobserved data streams. Next, the TRAS algorithm combines the top-$r$ local statistics to determine whether to raise an alarm, i.e., the stopping time of the TRAS algorithm is given by
\begin{equation}\label{stoppingtime.cusum}
\tau(a)=\inf\{ t \ge 1: \sum_{k=1}^{r} W_{(k),t} \ge a\},
\end{equation}
where $W_{(k),t}$'s are the order statistics of $W_{k,t}$'s in (\ref{eq.TRAS.cusum}). Moreover, the TRAS algorithm adaptively deploys the sensors to the data streams with $q$ largest local statistics $W_{k,t}$'s in (\ref{eq.TRAS.cusum}) at the next time step $t+1.$  As reported in \cite{liu2015adaptive}, the average detection delay performance of the TRAS algorithm is sensitive to the choice of the compensation parameter $\Delta,$ and it remains an open problem to decide how to choose it suitably. In our simulations below, we present results obtained by three different choices of $\Delta= 0.03, 0.05, 0.1$.

\subsection{Statistical Efficiency}\label{gaussian}

In this subsection, we focus on the statistical efficiency of the TSSRP algorithms when the prior knowledge is valid. We consider the scenario where we monitor $K=100$ independent Gaussian data streams whose pre-change distributions are $N(0,1)$, and the first $r$ local streams change to the post-change distribution $N(1.5,1).$ We compare our TSSRP algorithm against other procedures with the four choices of initial distribution $G$ and the correct post-change mean $\mu_1=1.5$.
We vary the number of changed data streams ranging from $1$ to $10$. As mentioned in Section \ref{parameters}, the parameter $r$ in the global stopping time defined in \eqref{stoppingtime} ideally should be the number of changed data steams. The latter, however, is usually unknown in practice. We report the simulation results of different global monitoring schemes under a large $r=10$ in Table \ref{tbl:efficiency}. Additional simulation results under a small $r=3$ is deferred to Table 5 in supplementary material. The corresponding standard errors are also included in these tables to characterize the average detection delay distribution.

\begin{table}[th]
	\centering
	\caption{average detection delay under various number of changed data streams for the evaluation of the statistical efficiency experiments when the data is independent multivariate Gaussian distributed. All the experiments are conducted under $r=10$.}
	\label{tbl:efficiency}
	\small
	\begin{tabular*}{1\textwidth}{@{\extracolsep{\fill}}lccccccccc@{}}\toprule
		The number of changes&1&3&5&8&10\\
		\midrule
		TSSRP($G_{0}$)&12.15(0.23)& 7.67(0.07)&6.66(0.05)&6.05(0.04)&5.81(0.03)\\
		TSSRP($G_{1}$)&12.06(0.23)&7.59(0.07)&6.75(0.05)&6.57(0.04)&6.49(0.04)\\
		TSSRP($G_{2}$)&18.84(0.33)&11.93(0.14)&10.05(0.11)&8.67(0.08)&8.22(0.07)\\
		TSSRP($G_{3}$)&19.43(0.35)&11.79(0.14)&9.84(0.11)&8.74(0.08)&8.04(0.07)\\
		TRAS($\Delta=0.03$)&36.12(0.60)&21.10(0.25)&17.01(0.20)&13.43(0.15)&11.87(0.13)\\
		TRAS($\Delta=0.05$)&36.79(0.54)&22.84(0.24)&18.52(0.18)&15.17(0.13)&13.52(0.12)\\
		TRAS($\Delta=0.1$)&63.43(0.44)&37.87(0.25)&30.47(0.18)&25.39(0.13)&22.89(0.12)\\
		RSADA&71.87(1.63)&36.61(0.69)&26.94(0.51)&21.20(0.38)&18.54(0.34)\\
		\bottomrule
	\end{tabular*}
\end{table}

We make two key observations from Table \ref{tbl:efficiency}. First, all four variants of our proposed TSSRP algorithms are statistically efficient in the sense of having significantly smaller average detection delays compared to other procedures. When incorporating the correct prior information, i.e., with the prior   $G_0$ and $G_1$ on the initial statistics $\tilde{R}_{k,t=0}$, the TSSRP algorithm achieves the smallest average detection delays, since it appropriately incorporates the Bayesian information on the spatial locations of changes. Even if we use non-informative priors such as $G_2$ or $G_3$, the TSSRP algorithm still provides good performance as compared to the baseline TRAS algorithm, suggesting that the choice of priors in the TSSRP algorithm can be flexible.  Second, fewer tuning efforts are required for the TSSRP algorithm, because we set the likelihood of the unobserved or missing data to be $1$ under the Bayesian framework. Moreover, the performance of the TSSRP algorithm is relatively stable to the tuning parameter $r$, e.g., the number of local sensors involved in the final decision making, see table 5 in the supplementary material. This is consistent with our intuition that the Shiryaev-Roberts statistics is the exponent of the CUSUM statistics, and the sum of top-$r$ Shiryaev-Roberts statistics mainly captures the maximum local statistics across all the data streams, since $\sum_{i=1}^{r} \exp(a_i) \sim \exp(a_1)$ for large values of ordered sequences $a_1 > a_2 > \cdots > a_r$. Therefore, our TSSRP algorithm is not only statistically efficient but also easy to tune and use in practice.

\subsection{Robustness}\label{tdist}

In this subsection, we focus on the robustness of the TSSRP algorithm when the underlying generative model is mis-specified. We focus on two cases: (i) TSSRP algorithm is constructed for the post-change mean lower bound $\mu_{1} = 1.5$ when the true post-change mean of Gaussian distribution is $\mu_{1, true} = 2$ and (ii) TSSRP algorithm is constructed for Gaussian distributions when the data follows $t$ distributions with degree of freedom $df=5$.

\begin{table}[t]
	\centering
	\caption{average detection delay under various number of changed data streams for robustness experiments when the post-change parameter is mis-specified. The data streams follows Gaussian distribution.}
	\label{table100large}
	\small
	\begin{tabular*}{1\textwidth}{@{\extracolsep{\fill}}lccccccccc@{}}\toprule
		The number of changes&1&3&5&8&10\\
		\midrule
		TSSRP($G_0$)&7.37(0.10)&5.43(0.03)&4.98(0.03)&4.54(0.02)&4.43(0.02)\\
		TSSRP($G_1$)&7.33(0.10)&5.33(0.03)&4.87(0.03)&4.77(0.03)&4.72(0.02)\\
		TSSRP($G_2$)&8.64(0.17)&5.84(0.07)&5.64(0.06)&5.49(0.05)&5.32(0.04)\\
		TSSRP($G_3$)&12.77(0.18)&8.28(0.09)&7.18(0.07)&6.16(0.05)&5.87(0.05)\\
		TRAS($\Delta=0.03$)&27.03(0.42)&16.42(0.21)&12.69(0.15)&10.03(0.11)&8.87(0.10)\\
		TRAS($\Delta=0.05$)&27.79(0.34)&17.42(0.18)&14.38(0.15)&11.10(0.11)&9.91(0.09)\\
		TRAS($\Delta=0.1$)&44.93(0.28)&27.73(0.17)&22.40(0.13)&18.78(0.11)&17.11(0.10)\\
		\bottomrule
	\end{tabular*}
\end{table}

The results for the first case are shown in Table \ref{table100large}, which summarizes the average average detection delay and the corresponding standard errors for TSSRP under four choices of priors and TRAS under $\Delta=0.03, 0.05,0.1$. As in the case in subsection \ref{gaussian}, we see that TSSRP outperforms TRAS and that performance of TSSRP generally improves as the prior information gets correct.  Moreover, compared with the results in Table \ref{tbl:efficiency}, we see that a larger magnitude of change is easier to detect, suggesting that we can use the smallest magnitude of change to specify the post-change parameters if they are unknown.

For the latter case, we determine the stopping threshold $A$ by 1000 Monte Carlo simulations under the $t$ distribution with mean $0$, and we choose $r=10$ to construct the global statistics.
Figure \ref{table100robust} plots the average detection delays against the number of changed data streams for TSSRP and TRAS, with varying priors ($G_1$ and $G_2$) and tuning parameters ($\Delta=0.03, 0.05$).  We observe a similar pattern as before, again reinforcing the usefulness of incorporating Bayesian information.

\begin{figure}[t]
	\centering
	\includegraphics[width=3.5in]{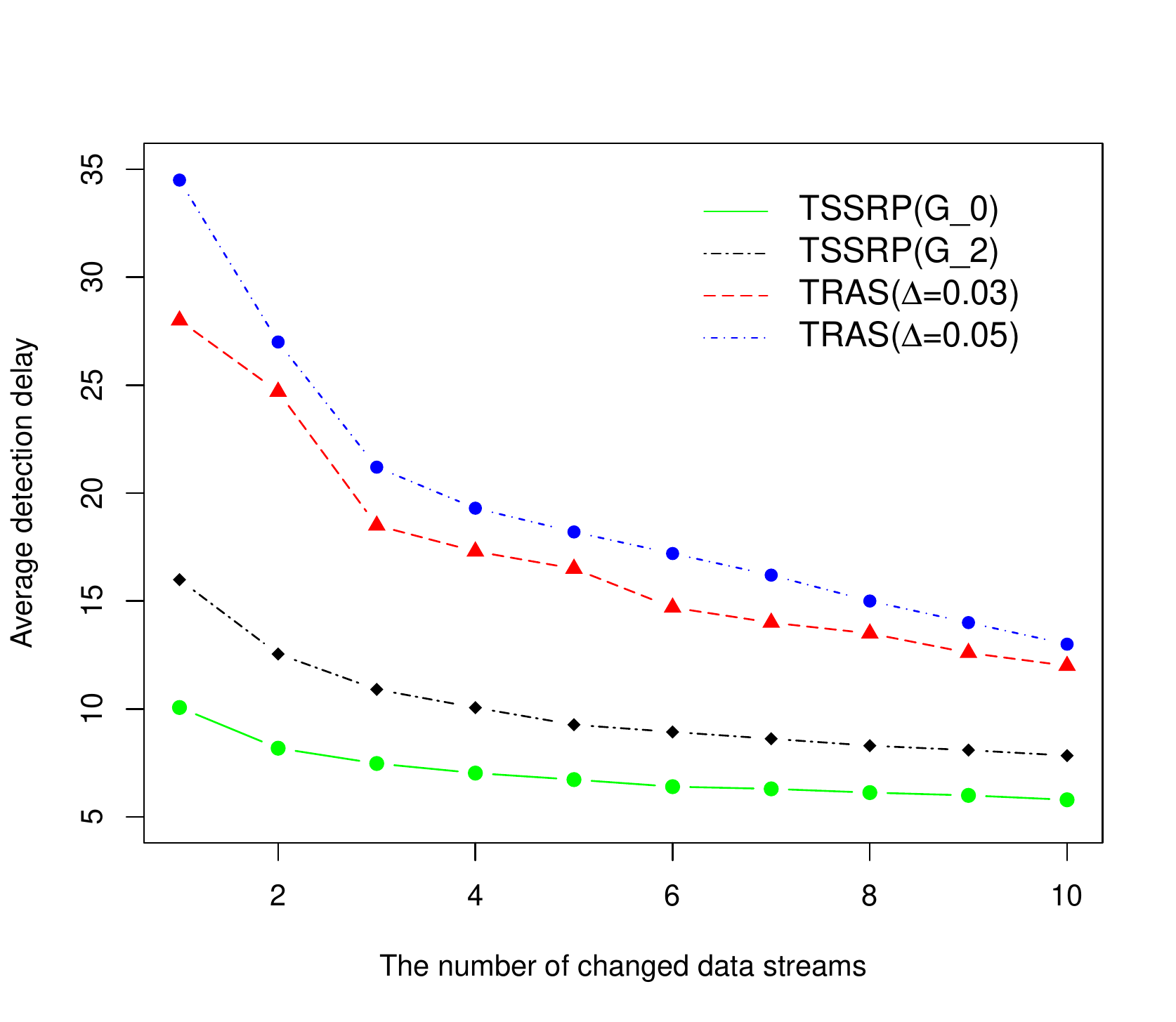}
	\caption{average detection delay versus the number of changed data streams for robustness experiments when the distribution is mis-specified. The data streams follows the $t$ distribution.}
	\label{table100robust}
\end{figure}

Our results suggest that our TSSRP algorithm can be robust when the distributional assumption is somewhat violated, and our TSSRP still outperforms the TRAS algorithm. This experiment further bolsters the stability of our algorithm, indicating that the hypothesized distributions can be slightly different from the underlying distributions of the data, and our algorithm will still raise an alarm quickly. In theory, the robustness of the TSSRP algorithm depends heavily on the robustness of the likelihood $\frac{f_1}{f_0}$. As a result, when the underlying models are significantly mis-specified, our TSSRP algorithm can be less efficient as compared to other robust methods such as the RSADA algorithm in \cite{xian2021online} that is based on ranks. In such a case, our algorithm can be extended to a more robust variant by considering the likelihood of sequential ranks as in \cite{gordon1995robust}, which is beyond the scope of this article, and we will leave it as future work.

%The RSADA algorithm in \cite{xian2021online} is a robust algorithm that is also based on ranks. We acknowlege that our algorithm is more statistically efficient but less robust compared to RSADA algorithm.

\section{Case Study}\label{realdata}

In this section, we evaluate the performance of the TSSRP algorithm on a real case example: the hot forming process. We also give an additional solar flare detection example on the high-dimensional case in Appendix B in supplementary material.

\begin{figure}[t]
	\centering
	\includegraphics[width=2.5in]{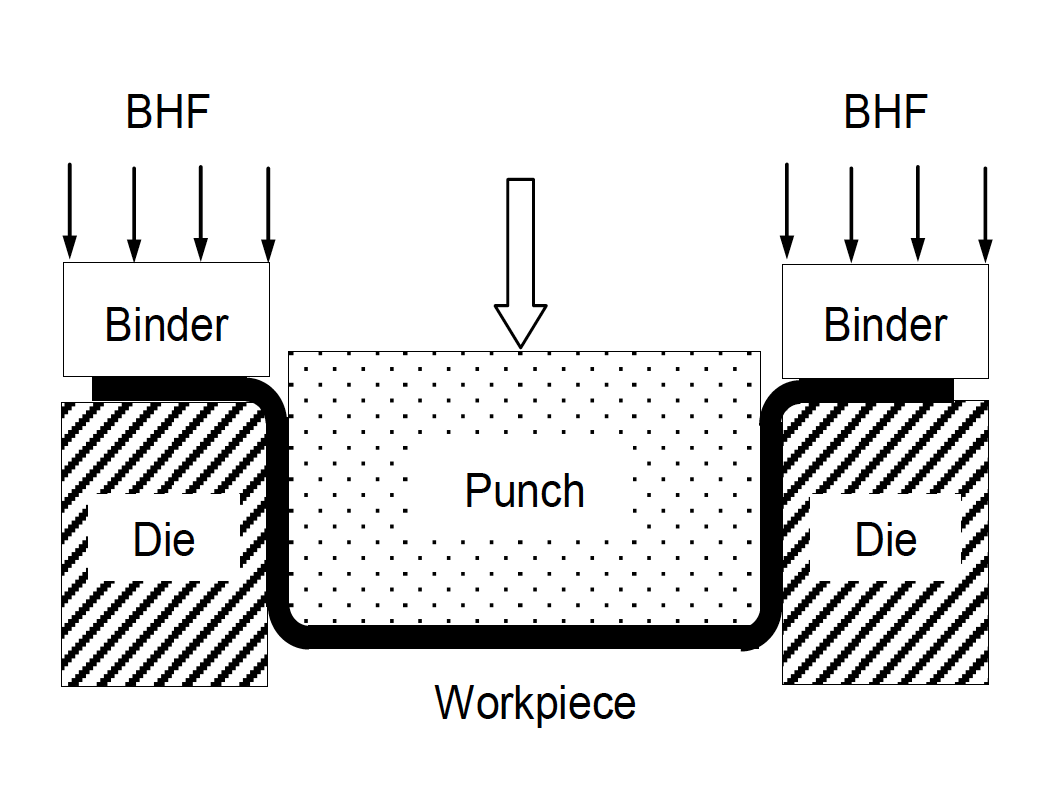}
	\includegraphics[width=2.5in]{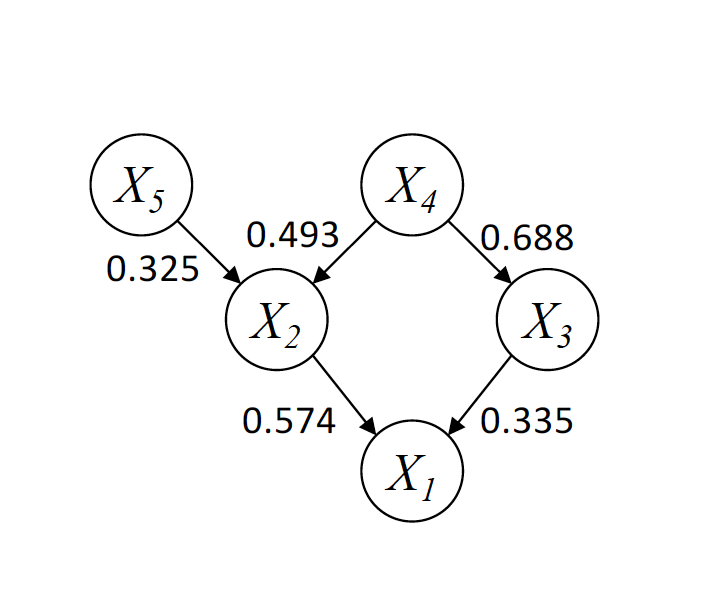}
	\caption{Left: 2-D illustration of the hot forming process. Right: Bayesian network for the hot forming process }
	\label{hotforming}
\end{figure}

We consider the Hot Forming Process example in \cite{li2010optimal}. We want to detect anomalies in this physical system. Figure \ref{hotforming} (Left) illustrates a two-dimensional (2-D) physical illustration of the hot forming process.  \cite{li2010optimal} identified the causal relationship of the five variables in this process: the final dimension of workpiece $X_1$, the tension in workpiece $X_2$, the material flow stress $X_3$, temperature $X_4$ and Blank Holding Force $X_5$, which can be represented as a Bayesian network. All the variables are proven to follow standard normal distribution when the system is under normal operating conditions. \cite{li2010optimal} gave the parameterization model of the Bayesian network, and Figure \ref{hotforming} Right illustrates the dependence across the five variables:
\begin{eqnarray}
X_i=\sum_{X_j\in p(X_i)} w_{j,i}X_j+\epsilon_i
\end{eqnarray}
where $p(X)=\{Y: Y\rightarrow X \in \text{Edge Set}\}$ denotes the parents of $X$; and $w(X_j, X_i)$ is the weight of the edge $X_j\rightarrow X_i$, which refers to the causal influence from $X_j$ to $X_i$; $\epsilon_i\sim N(0,\sigma_i^2)$ is the independent Gaussian noise.

In this study, we assume that the Bayesian network is unknown to us. We will only use the network structure to generate data under different scenarios.  We generate the changed root variables by setting the true mean change as two,  and the remaining variables according to the Bayesian network. In the algorithm, we set the post-change mean as the interested smallest shift magnitude  $\mu_{k,1}=1.5$ according to the characteristics of the actual system. We set the in-control ARL to be 100 and $r=2$ as in \cite{liu2015adaptive}. In each replicate, the changed data streams and the initial sensor layouts are selected randomly. We evaluate the average detection delay $D(T)$ as the average average detection delay under any change possibilities.

Table \ref{hotsim} summarizes the average detection delay comparisons between the TSSRP algorithm and the TRAS algorithm in the single change and two changes cases. It implies that the performance of our TSSRP algorithm is better than that of the baseline TRAS algorithm in this hot forming procedure.

\begin{table}[t]
	\centering
	\caption{Comparisons of the average detection delay between the TSSRP algorithm and the TRAS algorithm under different numbers of changed root variables in the hot forming process example}
	\small
	\label{hotsim}
	\begin{tabular*}{0.9\textwidth}{@{\extracolsep{\fill}}lccccccccc@{}}\toprule
		The number of changes&1&2\\
		\midrule
		TSSRP($U[0,1]$)&5.84(0.10)&4.54(0.06)\\
		TSSRP($P_0$)&6.48(0.09)&5.29(0.07)\\
		TRAS($\Delta=0.01$)&8.41(0.13)&7.39(0.11)\\
		TRAS($\Delta=0.1$)&8.17(0.13)&7.09(0.12)\\
		TRAS($\Delta=0.5$)&9.94(0.15)&8.78(0.14)\\
		\bottomrule
	\end{tabular*}
\end{table}
We are interested in studying how the sensor layouts update over the time under the in-control and out-of-control states empirically to validate Theorem \ref{prop_incontrol} and Theorem \ref{prop_outcontrol}. We summarize the average percentages of each data stream being observed under the two states in Table \ref{hotsim.sensor}, based on 1000 replicates. Specifically, it is defined as $$\text{percentage of being observed}=\frac{\# \text{time steps being observed}}{\# \text{total time steps until raising an alarm}}.$$ Here, the out-of-control state is when $X_1$ and $X_2$ change at the very beginning. The simulations further confirm that our TSSRP algorithm works similar to random sampling when the system is in control, and greedily selects the changed data streams when the system is out-of-control.

\begin{table}[t]
	\centering
	\caption{Sensor layouts distribution under the in-control and out-of-control states in the hot forming process example}
	\small
	\label{hotsim.sensor}
	\begin{tabular*}{0.9\textwidth}{@{\extracolsep{\fill}}lccccccccc@{}}\toprule
		Variable&Percentage (In-control)&Percentage (Out-of-control)\\
		\midrule
		$X_1$&0.408&0.718\\
		$X_2$&0.396&0.606\\
		$X_3$&0.400&0.226\\
		$X_4$&0.400&0.224\\
		$X_5$&0.396&0.226\\
		\bottomrule
	\end{tabular*}
\end{table}

\section{Conclusions and Discussion}\label{conclusion}

Processing high-velocity streams of high-dimensional data in resource-constrained environments is a big challenge. In this paper, we propose a bandit change-point detection approach to adaptively sample useful local components and determine a global stopping time for real-time monitoring of high-dimensional streaming data. Our proposed algorithm, termed Thompson-Sampling Shiryaev-Roberts-Pollak (TSSRP) algorithm, can balance between exploiting those observed local components that maximize the immediate detection performance and exploring not-been-monitored local components that might accumulate new information to improve future detection performance. Our numerical simulations and case studies show that the TSSRP algorithm can significantly reduce the average detection delay compared to the existing methods.

This work can be extended in several directions. First, based on the numerical simulation studies, we conjecture that our proposed TSSRP algorithm is first-order asymptotically optimal under a general setting. Still, it remains an open problem to prove it, as it is highly non-trivial to analyze the expected average detection delay of the proposed method.  Second, instead of a fixed number of active sensors, one could consider changing the number of active sensors per time step, and increase the number of sensors if a change likely occurs. Third, it is also interesting to find an optimal value of the number of active sensors that can adaptively adjust to make the best use of the resource.

Finally, we remark that our TSSRP algorithm is a computationally scalable representation of the limit of Bayesian procedures under the simplest model assumption where the data is i.i.d. and the post-change parameters are known. It can be extended to handle the case when the post-change parameters are unknown if we introduce a prior distribution on the post-change parameters. Furthermore, for more complicated models where the data streams have a spatial or temporal correlation structure, the proposed Bayesian and Thompson sampling framework can still be applicable if we can update the posterior distribution efficiently, say, via Markov chain Monte Carlo (MCMC). Therefore, our work opens a new research direction on statistical process control and sequential change-point detection when monitoring high-dimensional data streams under the sampling control.

\section*{Supplementary Materials}
In the supplementary materials, we provide (A) the detailed proofs of all theorems, (B) an additional case study in Solar Flare data, (C) additional simulation studies on our proposed algorithm when raising a global alarm based on the sum of the largest $r=3$ local statistics, and (D) the comparison with more global decision policies. The zip file contains R codes for our algorithm.

\section*{Acknowledgments}
The authors thank the editor, the associate editor, and two reviewers for their invaluable comments that greatly help to improve the article.

\section*{Funding}
W.Z. is supported in part by an ARC-TRIAD fellowship from the Georgia Institute of Technology, and a Computing Innovation Fellowship from the Computing Research Association (CRA) and the Computing Community Consortium (CCC). This work was completed while W.Z. was at Georgia Institute of Technology. Y.M. is supported in part by NSF grant  DMS-2015405. W.Z. and Y.M. are also supported in part by the National Center for Advancing Translational Sciences of the National Institutes of Health under Award Number UL1TR002378. The content is solely the responsibility of the authors and does not necessarily represent the official views of the National Institutes of Health.

\bibliographystyle{apalike}

\bibliography{monitoring}

\newpage
%\bigskip

\begin{appendices}

\begin{center}
	
	\title{\large \bf Bandit Change-Point Detection for Real-Time Monitoring High-Dimensional Data Under Sampling Control}

	\author{Wanrong Zhang \hspace{.2cm}\\
		Harvard University\\
		and \\
		Yajun Mei\hspace{.2cm}\\
		Georgia Institute of Technology}
	
		{\large\bf SUPPLEMENTARY MATERIAL}
\end{center}

%Here we present supplementary materials that can better understand our proposed method, and we split into four sections: missing proofs in \ref{app.proof}, additional case study in \ref{solar}, additional simulation studies in \ref{app.sim}, and more global decision policies in \ref{stoppingcomp}.
\new{In the supplementary materials, we provide missing proofs in \ref{app.proof}, an additional case study in \ref{solar}, additional simulation studies in \ref{app.sim}, and the comparison with more global decision policies in \ref{stoppingcomp}.}

\section{Missing Proofs}\label{app.proof}

\subsection{Proof for Theorem \ref{prop1}}\label{rule.app}

\begin{proof}
	Let us first investigate how to update the posterior probability $\Pi_{k,t}$. There are two subcases, depending on whether we take an observation from the $k$-th data stream or not. In the first case, when we do not take observations from the $k$-th stream, i.e., when $\delta_{k,t}=0$, we have the following recursive formula for the posterior probability $\Pi_{k,t}$.
	\begin{align*}
	\Pi_{k,t}&=P(\nu_k \le t | X^*_{k,1}, \cdots, X^*_{k,t-1})\\
	&=P(\nu_k \le t-1 | X^*_{k,1}, \cdots, X^*_{k,t-1})+P(\nu_k = t|\nu_k> t-1)P(\nu_k>t-1|X^*_{k,1}, \cdots, X^*_{k,t-1})\\
	&=\Pi_{k,t-1}+(1-\Pi_{k,t-1})p.
	\end{align*}
	The second case is when we take observations from the $k$-th stream, i.e., when $\delta_{k,t}=1$, by Bayes rule, we have the following the recursive formula:
	\begin{align*}
	\Pi_{k,t}=&P(\nu_k \le t|X^*_{k,1}, \cdots, X^*_{k,t-1}, X_{k,t})\\
	=&\frac{P(\nu_k \le t, X^*_{k,1}, \cdots, X^*_{k,t-1}, X_{k,t})}{P(X^*_{k,1}, \cdots, X^*_{k,t-1}, X_{k,t})}\\
	=&\frac{P(\nu_k \le t-1,  X^*_{k,1}, \cdots, X^*_{k,t-1}, X_{k,t})+P(\nu_k = t, X^*_{k,1}, \cdots, X^*_{k,t-1}, X_{k,t})}{P(X^*_{k,1}, \cdots, X^*_{k,t-1}, X_{k,t})}\\
	%=&\frac{P(X_{k,t}|\nu_k \le t-1, X^*_{k,1}, \cdots, X^*_{k,t-1})P(\nu_k \le t-1, X^*_{k,1}, \cdots, X^*_{k,t-1})}{P(X_{k,t}|X^*_{k,1}, \cdots, X^*_{k,t-1})P(X^*_{k,1}, \cdots, X^*_{k,t-1})}\\
	%&+\frac{P(\nu_k=t, X_{k,t} | X^*_{k,1}, \cdots, X^*_{k,t-1})P(X^*_{k,1}, \cdots, X^*_{k,t-1}) }{P(X_{k,t}|X^*_{k,1}, \cdots, X^*_{k,t-1})P(X^*_{k,1}, \cdots, X^*_{k,t-1})}\\
	%=&\frac{f_{\theta_{k,1}}(X_{k,t})\Pi_{k,t-1}+(1-\Pi_{k,t-1})pf_{\theta_{k,1}}(X_{k,t})}{P(X_{k,t}, \nu_k < t| X^*_{k,1}, \cdots, X^*_{k,t-1})+P(X_{k,t}, \nu_k = t| X^*_{k,1}, \cdots, X^*_{k,t-1})+P(X_{k,t}, \nu_k > t| X^*_{k,1}, \cdots, X^*_{k,t-1})}\\
	=&\frac{f_{\theta_{k,1}}(X_{k,t})\Pi_{k,t-1}+(1-\Pi_{k,t-1})pf_{\theta_{k,1}}(X_{k,t})}{f_{\theta_{k,1}}(X_{k,t})\Pi_{k,t-1}+(1-\Pi_{k,t-1})pf_{\theta_{k,1}}(X_{k,t})+(1-\Pi_{k, t-1})(1-p)f_{\theta_{k,0}}(X_{k,t})},
	\end{align*}
	where we use the fact that
	\begin{eqnarray*}
		%	P(\nu_k \le t-1,  X^*_{k,1}, \cdots, X^*_{k,t-1}, X_{k,t})&=& 	f_{\theta_{k,1}}(X_{k,t})\Pi_{k,t-1}\\
		%	P(\nu_k = t, X^*_{k,1}, \cdots, X^*_{k,t-1}, X_{k,t}) &=& (1-\Pi_{k,t-1})pf_{\theta_{k,1}}(X_{k,t})\\
		P(X^*_{k,1}, \cdots, X^*_{k,t-1}, X_{k,t}) &=& 	P(X_{k,t}, \nu_k < t| X^*_{k,1}, \cdots, X^*_{k,t-1})\\
		&&+ P(X_{k,t}, \nu_k = t| X^*_{k,1}, \cdots, X^*_{k,t-1})\\
		&&+P(X_{k,t}, \nu_k > t| X^*_{k,1}, \cdots, X^*_{k,t-1})\\
		&=&f_{\theta_{k,1}}(X_{k,t})\Pi_{k,t-1}+(1-\Pi_{k,t-1})pf_{\theta_{k,1}}(X_{k,t})\\
		&&+(1-\Pi_{k, t-1})(1-p)f_{\theta_{k,0}}(X_{k,t}).
	\end{eqnarray*}
	Let $R_{p,k,t}^{*}$ denote $\frac{\Pi_{k,t}}{ p(1-\Pi_{k,t})}$. Then the recursive formula for $R_{p,k,t}$ is as follows.
	\begin{align*}[left ={R_{p,k,t}= \empheqlbrace}]
	& \frac{f_{\theta_{k,1}}(X_{k,t})}{(1-p)f_{\theta_{k,0}}(X_{k,t})}(R_{p,k,t-1}+1),& \text{if} \quad \delta_{k,t}=1 \\
	& \frac{1}{1-p}(R_{p,k,t-1}+1), &\text{if} \quad \delta_{k,t}=0.
	\end{align*}
	We consider the limiting Bayesian approach by letting $p \to 0,$ and we arrive at the updating rule for  $R_{k,t}^{*} = \lim_{p \to 0} R_{p,k,t}^{*}$ as in  (\ref{eq.R}) for  $R_{k,t}$, except that the initial value $R^*_{k,t=0}$ are from the prior distribution $G$ and the initial value $R_{k,t}$ is $0.$
	
	By using (\ref{eq.R}), a proof by induction shows that $R^*_{k,t}=R_{k,t}+ L_{k,t} R^*_{k,t=0}$ for any time $t=1, 2, \cdots.$ Since $\tilde{R}_{k,t}$  also has the same prior distribution $G$ as $R^*_{k,t=0},$  relation (\ref{eq.random}) holds, completing the proof.
	%we can decompose the random initialization of $R^*_{k,0}$ that follows from prior $G$ as: $R^*_{k,0}=R_{k,0}+\tilde{R}_{k,t}$, with $R_{k,0}=0$ and $\tilde{R}_{k,t}$ identically distributed as $R^*_{k,0}$. The first term will be recursively updated to $R_{k,t}$, and the remaining term from the second will be $\tilde{R}_{k,t}L_{k,t}$, resulting equation (\ref{eq.random}).
\end{proof}

\subsection{ Proof for Theorem \ref{ARL}}\label{ARL.app}

\begin{proof}
	
	A high-level argument is as follows. In the derivation of the ARL, we observe that $\sum_{k=1}^K R_{k,t}-Kt$ is a martingale under the pre-change hypothesis. Moreover, we have $\sum_{k=1}^r R_{(k),t} \le \sum_{k=1}^K R_{k,t}$. By the optional sampling theorem, for the stopping time $T$ defined in (\ref{stoppingtime}), we show that $\mathbb{E}_\infty[\sum_{k=1}^K R_{k,T}/K]=\mathbb{E}_\infty[T]$, resulting the lower bound. For the upper bound, we define a new stopping time based on any fixed data stream $k$: $T^\prime=\inf\{ t \ge 1: R_{k,t} \ge A\}$. Then we have $\mathbb{E}_\infty[T]\le \mathbb{E}_\infty[T^\prime]=O(A)$, where the last equality follows from Theorem 1 in \cite{pollak1987average}.
	
	Below are the detailed arguments. We first prove the second part of the theorem. For any fixed $k$, $1\le k \le K$, we have $R_{k,t}\le \sum_{k=1}^r R_{(k),t}$. We define a new stopping time based on any fixed data stream $k$: $T^\prime=\inf\{ t \ge 1: R_{k,t} \ge A\}$. Then we have $\mathbb{E}_\infty[T]\le \mathbb{E}_\infty[T^\prime]=O(A)$, where the last equality follows from Theorem 1 in \cite{pollak1987average}.

	To provide a lower bound of $\mathbb{E}_{\infty} T$, we first show that $\sum_{k=1}^K R_{k,t}-Kt$ is a martingale under the pre-change hypothesis as follows.
	\begin{align*}
	&\mathbb{E}_{\infty}[\sum_{k=1}^K R_{k,(t+1)}-K(t+1)|X^*_{k,1}, \cdots, X^*_{k,t-1}, X^*_{k,t}]\\
	=&\mathbb{E}_{\infty}[\sum_{k=1}^K (R_{k,t}+1)\exp(\delta_{k,t+1}\log \frac{f_{\theta_{k,1}}(X_{k,t+1})}{f_{\theta_{k,0}}(X_{k,t+1})})-K(t+1)|X^*_{k,1}, \cdots, X^*_{k,t-1}, X^*_{k,t}]\\
	=&\sum_{k=1}^K (R_{k,t}+1)-K(t+1)\\
	=&\sum_{k=1}^K R_{k,t}-Kt.
	\end{align*}
	Hence, $\mathbb{E}_{\infty} [\sum_{k=1}^K R_{k,T}-KT] = 0$ exists for the stopping time $T$ that are bounded above. Since $\left|\sum_{k=1}^K R_{k,t}\right| < KA/r$ on $\{T > t\}$, we have $\lim \inf_{t \rightarrow \infty} \int_{\{T>t\}}|\sum_{k=1}^K R_{k,t}-Kt|dP_\infty=0$. Therefore, the martingale optional sampling theorem applies to yield $\mathbb{E}_\infty (\sum_{k=1}^K R_{k,T}-KT)=0$. Observing that $\sum_{k=1}^r R_{(k),t} \le \sum_{k=1}^K R_{k,t}$, we have
	\begin{eqnarray}\label{th.eq1}
	\mathbb{E}_\infty[KT]=\mathbb{E}_\infty[\sum_{k=1}^K R_{k,T}]\ge \mathbb{E}_\infty[\sum_{k=1}^r R_{(k),T}]\ge A,
	\end{eqnarray}
	completing the proof.

\end{proof}

\subsection{Proofs for Theorem \ref{prop_incontrol} and Theorem \ref{prop_outcontrol}}\label{incontrol.app}
To prove Theorem \ref{prop_incontrol} and Theorem \ref{prop_outcontrol}, we first present a simple lemma and its proof.
%\begin{lemma}\label{lemma1}
%	Let $U_{t_0}$ be the set that the sensors will never redistribute to any data streams belongs to U after $t_0$, i.e., for each $k \in U_{t_0},$ we have $k \notin S_t$ for all $ t>t_0$. We have $R'_{k',t}\ge R'_{k,t}$, for all $k'\in [K]/U, k\in U$, and for all $t>t_0$.
%\end{lemma}	

\begin{lemma}\label{lemma2}
	Suppose that there exists a set $U$ such that for all $k \in U$, there exists a $t_0>0$ such that for all $t>t_0$, we have $P(k \in S_{t})=0$. We have $R^*_{k',t}\ge R^*_{k,t}$, for all $ k'\in [K]/U, k\in U$ and for all $ t>t_0$.
\end{lemma}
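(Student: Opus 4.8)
\textbf{Proof plan for Lemma \ref{lemma2}.}

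The plan is to exploit the explicit recursive structure of the local statistics and the hypothesis that the sensors in $U$ are never again placed after time $t_0$. First I would recall from Theorem \ref{prop1} (and its proof) the decomposition $R^*_{k,t} = R_{k,t} + L_{k,t}\tilde{R}_{k,t}$, together with the recursions (\ref{eq.R}) and (\ref{eq.likelihood}). The key observation is that for any $k \in U$ and any $t > t_0$ we have $\delta_{k,t} = 0$ with probability one, so the recursions for $k$ collapse to the trivial updates $R_{k,t} = R_{k,t-1} + 1$ and $L_{k,t} = L_{k,t-1}$. Hence, writing $s = t - t_0$, we get $R_{k,t} = R_{k,t_0} + s$ and $L_{k,t} = L_{k,t_0}$, so $R^*_{k,t} = R_{k,t_0} + s + L_{k,t_0}\tilde{R}_{k,t}$, i.e. $R^*_{k,t}$ grows exactly linearly with slope $1$ for sensors in $U$.

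Next I would argue that for any $k' \in [K]\setminus U$ the statistic $R^*_{k',t}$ grows at least as fast. Here the point is that the update multiplier $\tfrac{f_{\theta_{k',1}}(X_{k',t})}{f_{\theta_{k',0}}(X_{k',t})}$ is always nonnegative, so whenever $\delta_{k',t}=1$ we have $R_{k',t} = (R_{k',t-1}+1)\cdot(\text{nonneg})$, and whenever $\delta_{k',t}=0$ we have $R_{k',t} = R_{k',t-1}+1$; in the latter case $R^*_{k',t}$ also increases by exactly $1$ in its $R$-component (and $L_{k',t}$ is unchanged). The cleanest way to run the argument is to compare increments step by step from $t_0$ onward: I would show by induction on $t \ge t_0$ that $R^*_{k',t} - R^*_{k',t_0} \ge R^*_{k,t} - R^*_{k,t_0} = t - t_0$, using that a sensor in $[K]\setminus U$ that is sampled infinitely often only accumulates extra (nonnegative) multiplicative factors — and if it is sampled the statistic can only be further boosted relative to the baseline $+1$ increment. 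To close the gap at the base point $t_0$ I would invoke the hypothesis/standing convention governing the order statistics: since $k'$ was eligible to be (and, on the relevant event, was) sampled in favor of $k$ around time $t_0$, the Thompson-sampling selection rule forces $R^*_{k',t_0} \ge R^*_{k,t_0}$ — this is precisely what it means for the sensors on $U$ to be abandoned while those on $[K]\setminus U$ are retained. Combining the base inequality with the increment inequality yields $R^*_{k',t} \ge R^*_{k,t}$ for all $t > t_0$.

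The main obstacle I anticipate is the base-case step rather than the induction: one has to be careful that the hypothesis ``$P(k\in S_t)=0$ for all $t>t_0$'' genuinely implies, through the selection rule $S_{t}=\{l_{(1),t-1},\dots,l_{(q),t-1}\}$, that every abandoned stream in $U$ had a strictly (or weakly) smaller sampling statistic $R^*_{\cdot,t_0}$ than every retained stream in $[K]\setminus U$ at the comparison time. This requires either taking $t_0$ large enough that the layout has stabilized on $[K]\setminus U$, or arguing that if some $k\in U$ ever exceeded some $k'\notin U$ in the statistic, then $k$ would re-enter the layout (contradicting the hypothesis), which in turn uses the monotone-growth facts established above. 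Modulo carefully unwinding this selection-rule bookkeeping, the rest is a routine induction using nonnegativity of likelihood ratios and the linear-growth formula for sensors in $U$.
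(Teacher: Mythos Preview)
Your inductive step has a genuine gap. You claim that when $\delta_{k',t}=1$ the statistic ``can only be further boosted relative to the baseline $+1$ increment,'' but the likelihood ratio $\Lambda_{k',t}=f_{\theta_{k',1}}(X_{k',t})/f_{\theta_{k',0}}(X_{k',t})$ is merely nonnegative, not $\ge 1$. Under the pre-change law in particular, $\mathbb{E}_\infty[\log\Lambda_{k',t}]<0$, so $\Lambda_{k',t}<1$ occurs with positive probability; then $R_{k',t}=(R_{k',t-1}+1)\Lambda_{k',t}$ can be strictly smaller than $R_{k',t-1}+1$ and even smaller than $R_{k',t-1}$. Hence the increment inequality you propose, $R^*_{k',t}-R^*_{k',t_0}\ge t-t_0$, fails, and the induction collapses. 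You anticipated the wrong obstacle: the base case is not the issue, the inductive comparison is.

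The paper does not attempt any monotone-growth comparison. It argues by contradiction directly from the top-$q$ selection rule: suppose at some $t>t_0$ one had $R^*_{k,t}>R^*_{k',t}$ with $k\in U$, $k'\notin U$. If $k'\in S_t$, the selection rule already forces $R^*_{k',t}\ge R^*_{k,t}$, a contradiction. If $k'\notin S_t$, then both $k$ and $k'$ are currently unobserved and hence receive identical increments going forward; since $k'\notin U$ it must re-enter the layout at some future time, but at that selection instant $k$ still dominates $k'$ and so $k$ would be chosen first, contradicting $k\in U$. You actually state this idea in your final paragraph, but you frame it as a device for the base case that ``uses the monotone-growth facts established above''; in fact it is the whole proof and requires no such facts.
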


\begin{proof}
	We will prove by contradiction. Suppose there exists a $t>t_0$,  a $ k \in U$, and a $ k'\in [K]/U$ such that $R^*_{k,t}>R^*_{k',t}$, we consider two cases. If $k'\notin S_t$, then the sensors will first be deployed to the $k$-th data stream then to the $k'$-th data stream because the increments of the two processes are the same if they are unobserved. If $k' \in S_t$, then it indicates that $R^*_{k',t}>R^*_{k,t}$, which leads to contradiction.
\end{proof}

We will use Lemma \ref{lemma2} to prove Theorem \ref{prop_incontrol} and Theorem \ref{prop_outcontrol}.
\begin{proof}[Proof of Theorem \ref{prop_incontrol}]
	Suppose that there exists a set $U$ such that for each $ k \in U$, there exists a $ t>0$, such that for all $ t'>t$, we have $P(k \in S_{t'})=0$.
	
	For any data streams in the set $[K]/U$, we only consider the case when the data is observed, because otherwise, the increments are the same as that of any data streams in the set $U$. For each $ k \in U$, the increment of the logarithmic scale of the local statistic is nearly zero, while the increment for each $k' \in [K]/U$ is nearly $\sum_{t=t_0}^T \log f_{\theta_{k,1}}(X_{k,t})/f_{\theta_{k,0}}(X_{k,t})$. Let $Y_{k,t}=\log f_{\theta_{k,1}}(X_{k,t})/f_{\theta_{k,0}}(X_{k,t})$. It remain to prove $P_{f_{\theta_{k,0}}}\left( \inf_{T<\infty} \sum_{t=t_0}^T Y_{k,t}\ge 0\right) =0$.
	
	Define $Z_n=\sum_{t=t_0}^{t_0+n} Y_{k,t}$ and $\tau_-=\inf\{n: Z_n\le 0 \}$. Notice that $Y_{k,t}$ has negative drift since $\mathbb{E}_{f_{\theta_{k,0}}}\log f_{\theta_{k,1}}(X_{k,t})/f_{\theta_{k,0}}(X_{k,t}) <0$, we have $P_{f_{\theta_{k,0}}}\left( \inf_{n<\infty} Z_n\ge 0\right) = P_{f_{\theta_{k,0}}}\{\tau_-=\infty\}=0$. It means that $R^*_{k,t} $ for any $k\in U$ will eventually get greater than $R^*_{k',t}$ for any $k' \in [K]/U$, which is contradictory to our Lemma \ref{lemma2}.
\end{proof}

\begin{proof}[Proof of Theorem \ref{prop_outcontrol}]
	Suppose we observe the $k$-th data stream at time $t_0$ and it is out of control. Let $Y_{k,t}=\log f_{\theta_{k,1}}(X_{k,t})/f_{\theta_{k,0}}(X_{k,t})$ and $Z_n=\sum_{t=t_0}^{t_0+n} Y_{k,t}$ as in the proof of Theorem \ref{prop_incontrol}. Under the out-of-control state, we have $\mathbb{E}_{f_{\theta_{k,1}}}[Y_{k,t}]>0$.
	By Corollary 8.44 in \cite{siegmund2013sequential}, for $\tau_{-}=\inf\{n: Z_n\le 0 \}$, we have $P(\tau_-=\infty)>0$, which implies $P_{f_{\theta_{k,1}}}\left( \inf_{n<\infty} Z_n\ge 0\right) >0$. It means with positive probability, the local statistic of the $k$-th data stream is always greater than the local statistics of other unobserved data streams. Therefore, we have $P(k\in S_t, \forall t>t_0|k\in S_{t_0})>0$.
	
	%	Define $\tau_{+}=\inf\{n: Z_n>0 \},$ $\tau_{-}=\inf\{n: Z_n\le 0 \}$ and $\tau_{-}^{(j)}=\inf\{n: Z_n>0; n>\tau_-^{(j-1)}\}$. We have
	%	\begin{align*}
	%	P (\tau_+>n)&=P (Z_1\le 0, Z_2\le 0,\ldots, Z_n\le 0)\\
	%	&=P (Z_n-Z_{n-1}\le 0, Z_{n}-Z_{n-2}\le 0, \ldots, Z_{n}-Z_{0}\le 0)\\
	%	&=P (Z_n=\min_{0\le k \le n} Z_k)\\
	%	&=\sum_{j=0}^\infty P(\tau_-^{(j)}=n),
	%	\end{align*}
	%	where the second equality is because $(Y_{k,t_0}, Y_{k,t_0+1}, \ldots, Y_{k,t_0+n})$ and $(Y_{k,t_0+n}, Y_{k,t_0+n-1}, \ldots, Y_{k,t_0})$ have the same distribution. Summing over $n$ yields
	%	\begin{align*}
	%	\mathbb{E}(\tau_+)=\sum_{n=0}^\infty P (\tau_+>n)
	%	&=\sum_{n=0}^\infty \sum_{j=0}^\infty P(\tau_-^{(j)}=n)\\
	%%	&=\sum_{j=0}^\infty \sum_{n=0}^\infty P(\tau_-^{(j)}=n)\\
	%%	&=\sum_{j=0}^\infty P (\tau_-^{(j)}<\infty)\\
	%%	&=\sum_{j=0}^\infty P (\tau_-<\infty)^j \\
	%%	&=\frac{1}{1-P(\tau_-<\infty)}\\
	%	&=\frac{1}{P(\tau_-=\infty)}.
	%	\end{align*}
	%	Since $Y_{k,t}$ has postive drift, we have $P_{f_{\theta_{k,1}}}\left( \inf_{n<\infty} Z_n\ge 0\right) =P(\tau_-=\infty)=1/\mathbb{E}(\tau_+)>0$.	
\end{proof}

\section{Additional Case Study: Solar Flare Data}\label{solar}

We apply the proposed method to a real dataset collected by the Solar Data Observatory, which illustrates an abrupt emergence of a solar flare. The Solar Data Observatory generates 1.5 terabytes of daily data \citep{xie2013change}. Traditional image detection methods that require fully observable data cannot process such high-dimensional and high-velocity datasets on platforms with limited processing power. Besides, when the solar flare incurs, we unsure about its location and the number of affected pixels. Consider the transmission width constraint and the processing power limit; we wish to leverage selected partial data to detect solar flare events. We apply the proposed TSSRP algorithm to this dataset to demonstrate its computational efficiency and monitoring capability.

The data is publicly available at http://nislab.ee.duke.edu/MOUSSE/index.html. It contains a total of $n=300$ frames in the video. Each frame is of size $232\times292$ pixels, which results in $D=67744$ dimensional streaming data. According to the original video, there are two obvious transient flares. One occurred at around $t=187\sim202$, and the other one occurred at $t=216\sim268$. Figure \ref{solarplot} (a and c) are snapshots of the original frame and residual map at $t=210$ before the solar flare emerges, respectively. Figure \ref{solarplot} (b and d) are snapshots at $t=223$ when the second solar flare is brightest. From the figures, we can see that the solar flare signal is sparse. Thus, it is reasonable to monitor this process by sample only a small fraction of the data.

In this study, we first preprocess the data by the MOUSSE algorithm proposed by \cite{xie2013change} to obtain the residual data to remove the background. We verified the normality assumption for the residuals through the Kolmogorov-Smirnov test.  Since there is no change-point before the first 100 solar flares, we treat the first 100 frames as historical observations. Then we standardize the residual data so that the first 100 frames have mean zero and standard deviation one. The threshold is determined by 1000 Monte Carlo simulations of the standard normal distribution, and the IC-ARL is set to be 2500.

We set $r=40, \theta_{k,1}=0.3 $ in the algorithm. We assume that only 2000 out of 67744 pixels are available, i.e., $q=2000$, for data monitoring. Figure \ref{monitorstat} plots the monitoring statistics $\sum_{k=1}^{r}R_{(k),t}^{(0)}$ against $t$, the index of time, based on the top-40 local statistics. The detection threshold is $\exp(10)$. From Figure \ref{monitorstat}, we see that the two solar flares are clear. The monitoring statistic goes sharply up when the solar flare incurs and goes down quickly when it ends. We detect the first solar flare at time $t=192$ and the second one at time $t=217$, comparing to the results of the TRAS algorithm where the two solar flares are detected at time $t=190$ and time $t=221$ in \cite{liu2015adaptive}. Our algorithm is comparable to the algorithm by \cite{xie2013change}, which requires full observations. Their algorithm detects the two solar flares at time $t=191$ and time $t=217$.

\begin{figure}[t]
	\centering
	\includegraphics[width=3in]{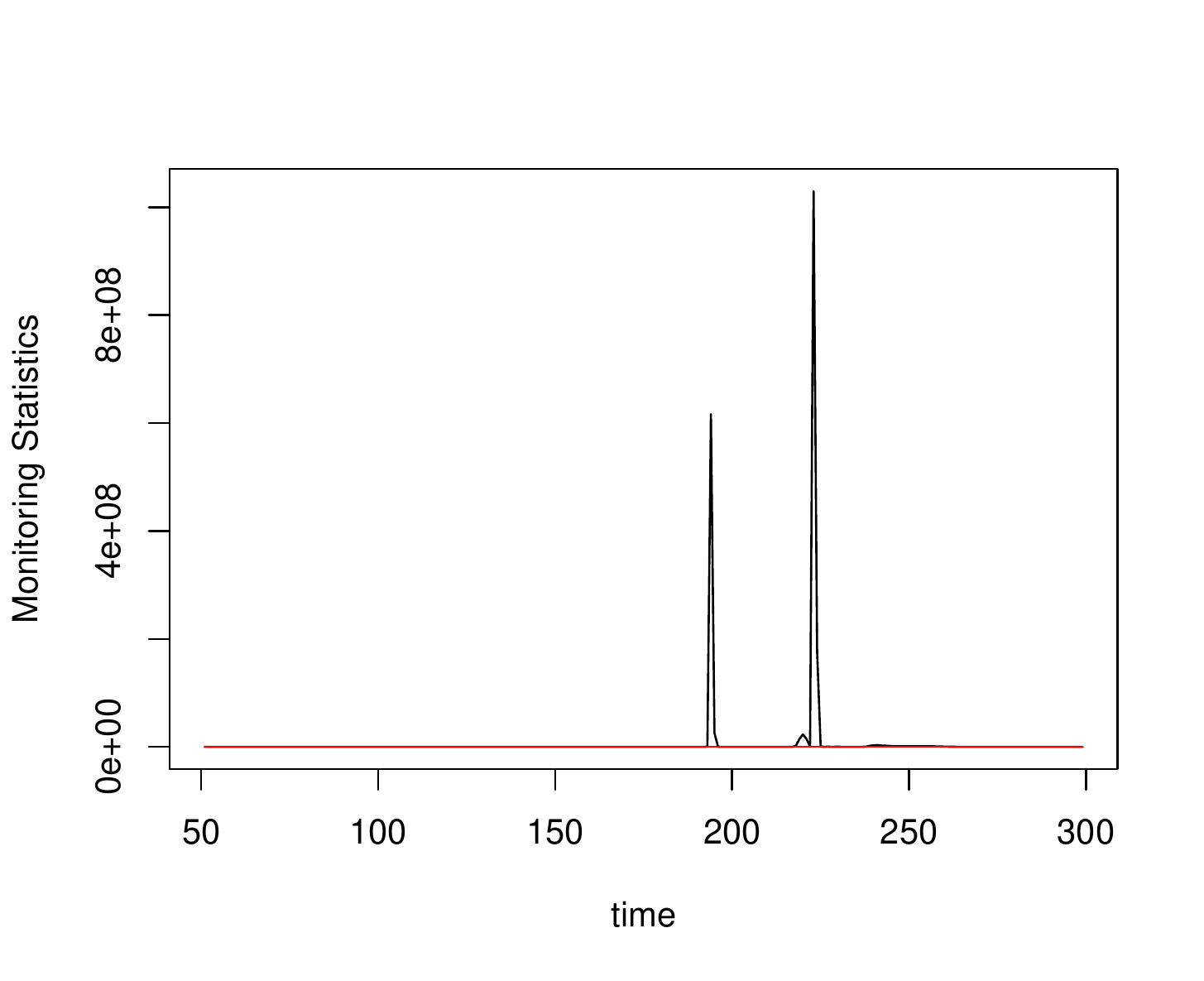}
	\caption{Monitoring statistics against time in the solar flare detection example}
	\label{monitorstat}
\end{figure}

\begin{center}
	\begin{minipage}{.7\linewidth}
		\begin{figure}[H]
			\centering
			\subfloat[][Raw data at $t=210$]{\includegraphics[width=.4\textwidth]{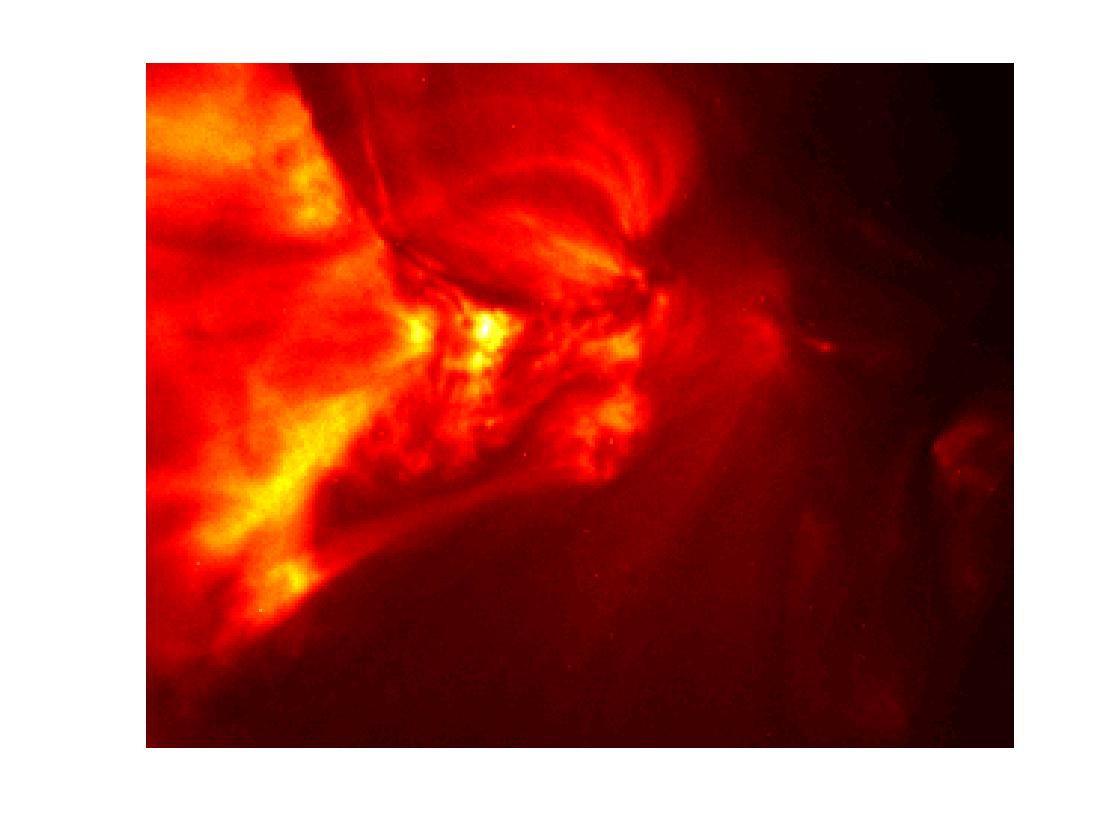}}\quad
			\subfloat[][Raw data at $t=223$]{\includegraphics[width=.4\textwidth]{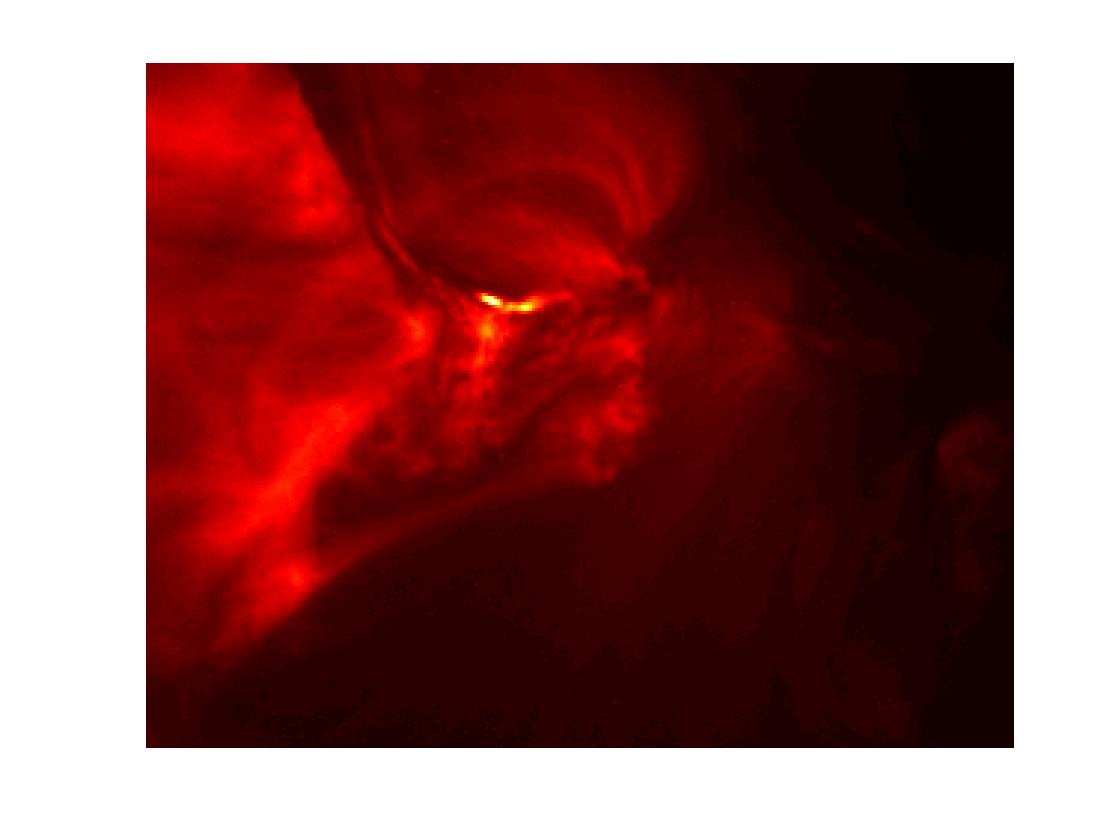}}\\
			\subfloat[][Residual at $t=210$]{\includegraphics[width=.4\textwidth]{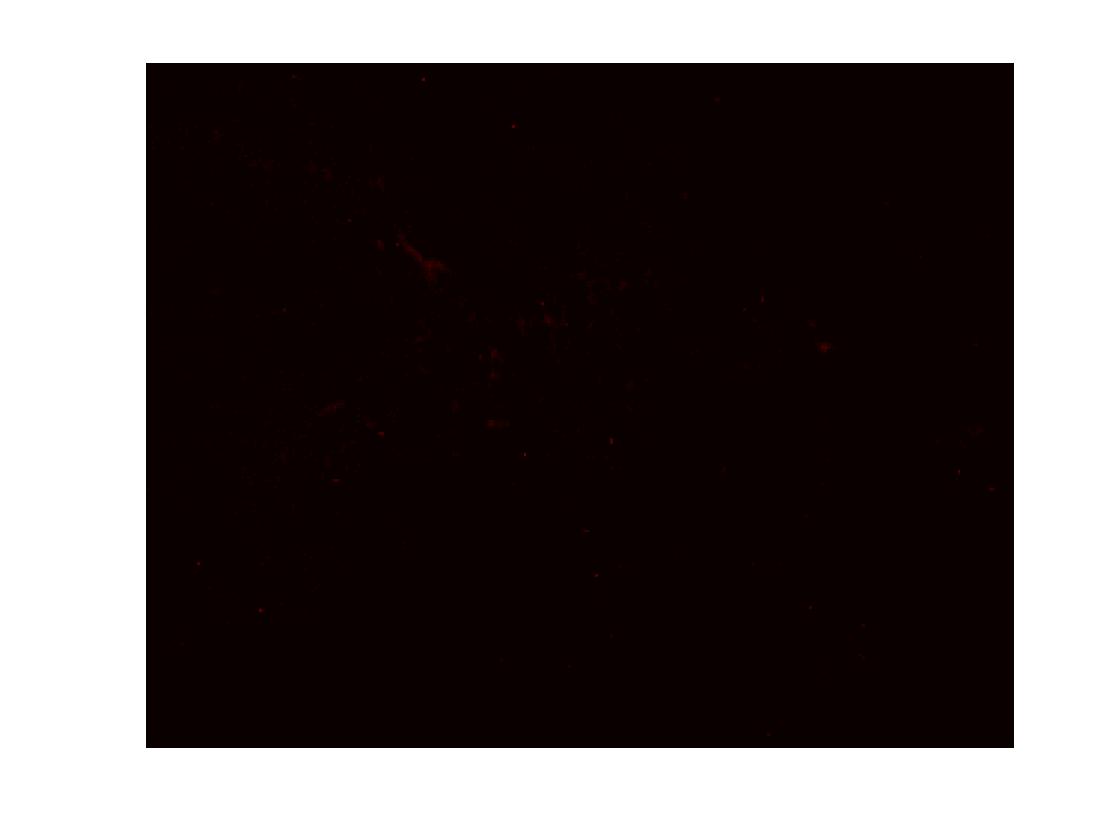}}\quad
			\subfloat[][Residual at $t=223$]{\includegraphics[width=.4\textwidth]{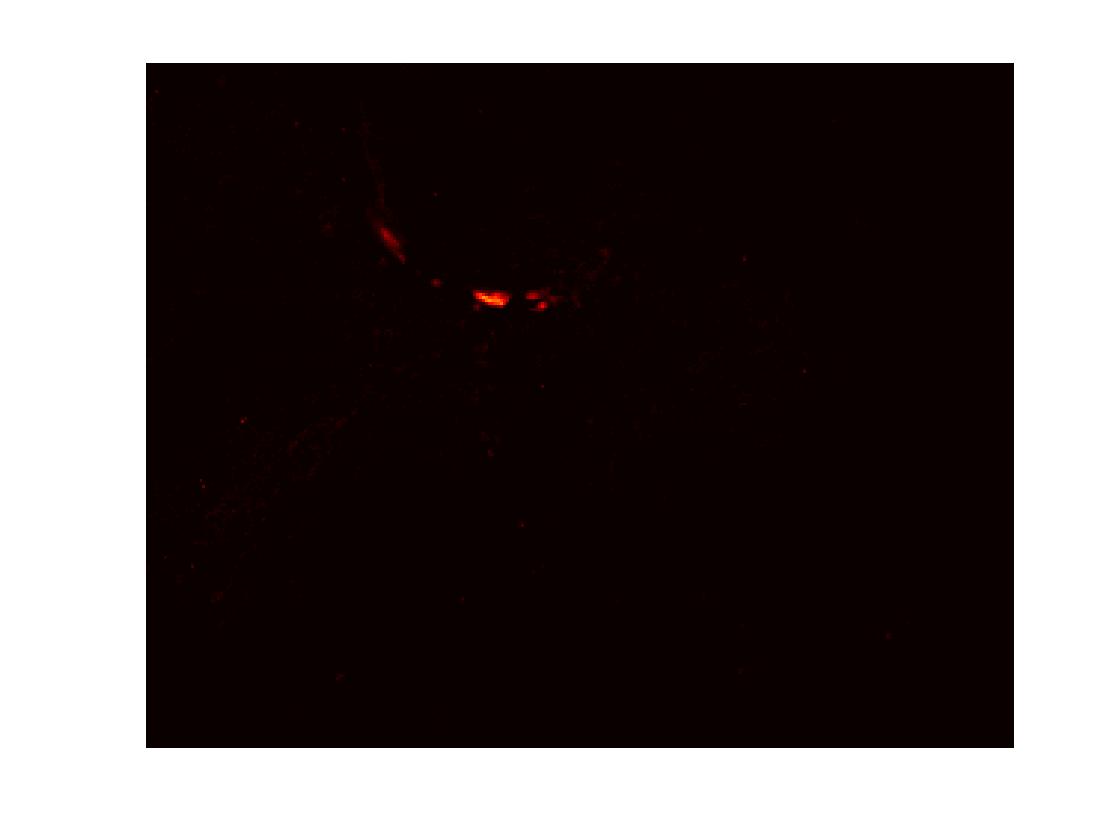}}\\
			\caption{\small Solar flare detection: the snapshot of the video for raw data and residual data at frame $t=210$ when there is no solar flare (a and c); at frame $t=223$ when the solar flare reaches the peak (b and d).
			} \label{solarplot}
		\end{figure}
	\end{minipage}
\end{center}

\section{Additional Simulation Study when $r=3$}\label{app.sim}

Here we present our proposed TSSRP algorithm's properties when $r=3$, i.e., when raising a global alarm based on the sum of the largest $r=3$ local Shiryaev-Robert statistics $R_{k,t}.$

\begin{table*}[h]
	\centering
	\caption{average detection delay under various number of changed data streams for the evaluation of the statistical efficiency experiments when the data is independent mutivariate Gaussian distributed. All the experiments are conducted under $r=3$.}
	\label{table100small}
	\small
	\begin{tabular*}{1\textwidth}{@{\extracolsep{\fill}}lccccccccc@{}}\toprule
The number of changes&1&3&5&8&10\\
\midrule
TSSRP($G_{0}$)&12.537(0.267)&7.668(0.072)&6.761(0.048)&6.106(0.037)&5.814(0.032)\\
TSSRP($G_{1}$)&12.515(0.260)&7.573(0.069)&6.698(0.046)&6.528(0.045)&6.406(0.045)\\
TSSRP($G_{2}$)&18.686(0.332)&11.887(0.146)&9.968(0.103)&8.809(0.084)&8.222(0.076)\\
TSSRP($G_{3}$)&19.729(0.340)&11.884(0.142)&10.038(0.104)&8.654(0.081)&8.128(0.072)\\
TRAS($\delta=0.03$)&32.267(0.589)&18.423(0.265)&13.990(0.187)&11.177(0.137)&9.904(0.114)\\
TRAS($\delta=0.05$)&32.594(0.571)&18.693(0.245)&14.532(0.175)&11.972(0.138)& 10.889(0.119)\\
TRAS($\delta=0.1$)&40.885(0.479)&25.579(0.232)&20.777(0.176)&17.088(0.142)&15.526(0.117)\\

		\bottomrule
	\end{tabular*}
\end{table*}

%\end{appendices}

\section{More global decision policies}\label{stoppingcomp}

Besides the global decision policy defined by the stopping time $T$ in (\ref{stoppingtime}), defined as $T_{1}$ here, we can also consider other global decision policies.

First, we can use the random realizations of $R^*_{k,t}$ for global decision. Let $R^*_{(1),t} \ge \ldots \ge R^*_{(k),t} \ge \ldots \ge R^*_{(m),t}$ denote the decreasing order of randomized local statistics. The stopping time based on the summation of the top-$r$ randomized local statistics is as follows.
\begin{eqnarray}\label{T2}
T_{2}=\inf\{ t \ge 1: \sum_{k=1}^{r}R^*_{(k),t} \ge A\}
\end{eqnarray}

Second, the local statistics can also be the logarithms of  $R_{k,t}$ or $R^*_{k,t}.$ This yields two more decision policies:
\begin{enumerate}
	%	\item Sum of top-$r$ original local statistics (TSSRP):
	%	$T_{1}=\inf\{ t \ge 1: \sum_{k=1}^{r}R_{(k),t} \ge A\}$
	\item Sum of the logarithm of the top-$r$ classical Shiryaev-Roberts statistics:
	\begin{eqnarray}\label{T3}
	T_{3}=\inf\{ t \ge 1: \sum_{k=1}^{r}\log R_{(k),t}\ge \log A\}
	\end{eqnarray}
	\item Sum of the logarithm of top-$r$ randomized local statistics:
	\begin{eqnarray}\label{T4}
	T_{4}=\inf\{ t \ge 1: \sum_{k=1}^{r}\log R^*_{(k),t} \ge \log A\}
	\end{eqnarray}
\end{enumerate}

We compare the proposed stopping rule with those rules numerically in the simulation experiments under Gaussian distributed data. The simulation is conducted with $r=10$ and $G_2$. We summarize the results in Figure \ref{com.stopping}. It is evident from Figure \ref{com.stopping} that the average detection delay under stopping rule $T$ in (\ref{stoppingtime}) is stable and outperforms other schemes in most cases. The only exception is $T_4$ when the number of true changes gets close to $r=10$. Thus, we pick $T$ as the global stopping rule in the proposed TSSRP algorithm.

%In the case when $r=3$ top local statistics are used in the global monitorimg scheme, we will omit such comparision to highlight other findings, and the results are given in Table \ref{table100small}.
\begin{figure}[t]
	\centering
	\includegraphics[width=4in]{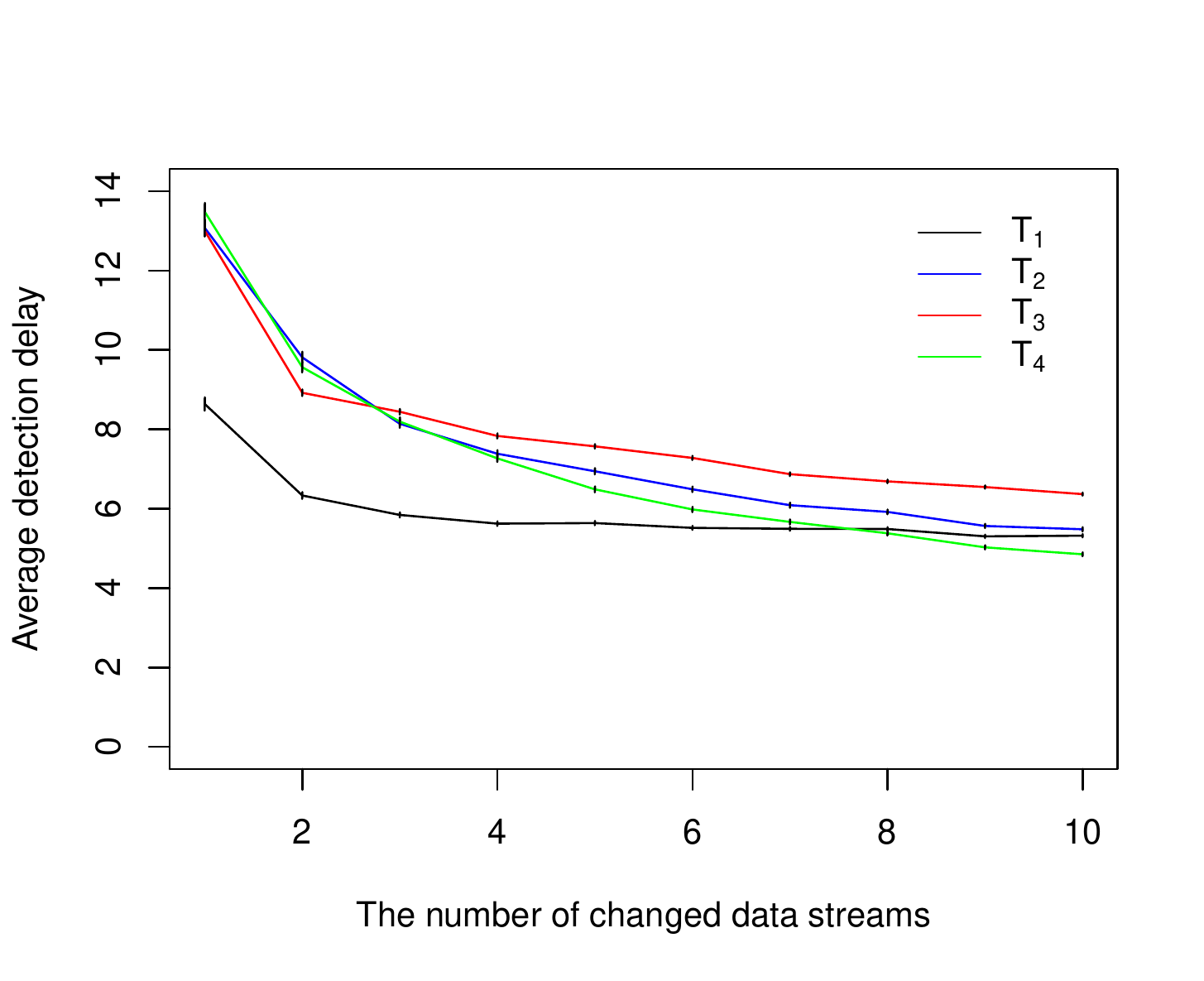}
	\caption{Comparison of four stopping rules:  $T$ in (\ref{stoppingtime}), $T_2$ in (\ref{T2}),  $T_3$ in (\ref{T3}), and  $T_4$ in (\ref{T4}), when the data is independent multivariate Gaussian distributed}
	\label{com.stopping}
\end{figure}

\end{appendices}

\end{document}